\documentclass{vldb}

\pdfoutput=1

\usepackage{times}
\usepackage{graphicx}
\usepackage{float}
\usepackage{amssymb}
\usepackage{amsmath}
\usepackage{amsfonts}
\usepackage{algorithm}
\usepackage{array}
\usepackage{color}

\newtheorem{theorem}{Theorem}
\newtheorem{lemma}{Lemma}

\newtheorem{definition}{Definition}
\DeclareMathOperator{\EXP}{Exp}
\DeclareMathOperator{\POLY}{Poly}
\DeclareMathOperator{\PR}{Pr}
\DeclareMathOperator{\LAP}{Lap}
\DeclareMathOperator{\TUP}{\text{-}Tup}
\DeclareMathOperator{\SUPP}{supp}
\DeclareMathOperator{\TRUE}{True}
\DeclareMathOperator{\FALSE}{False}
\DeclareMathOperator{\IMPACT}{impact}
\DeclareMathOperator{\AVGDEG}{avgdeg}

\title{Recursive Mechanism: Towards Node Differential Privacy and Unrestricted Joins [Full Version, Draft 0.1]\thanks{A different version of this paper has been accepted by SIGMOD 2013.}}

\author{
    Shixi Chen\\
    Fudan University\\
    chensx@fudan.edu.cn
\and
    Shuigeng Zhou\\
    Fudan University\\
    sgzhou@fudan.edu.cn
}

\setlength{\pdfpagewidth}{8.5in}
\setlength{\pdfpageheight}{11in}
\hyphenpenalty=100
\exhyphenpenalty=100
\sloppy
\hyphenation{Eq}
\allowdisplaybreaks

\begin{document}
    \maketitle

    \begin{abstract}
Existing studies on differential privacy mainly consider aggregation on data sets where each entry corresponds to a particular participant to be protected. In many situations, a user may pose a relational algebra query on a sensitive database, and desires differentially private aggregation on the result of the query. However, no known work is capable to release this kind of aggregation when the query contains unrestricted join operations. This severely limits the applications of existing differential privacy techniques because many data analysis tasks require unrestricted joins. One example is subgraph counting on a graph.
Existing methods for differentially private subgraph counting address only edge differential privacy and are subject to very simple subgraphs. Before this work, whether any nontrivial graph statistics can be released with reasonable accuracy under node differential privacy is still an open problem.

In this paper, we propose a novel differentially private mechanism to release an approximation to a linear statistic of the result of some positive relational algebra calculation over a sensitive database. Unrestricted joins are supported in our mechanism. The error bound of the approximate answer is roughly proportional to the \emph{empirical sensitivity} of the query --- a new notion that measures the maximum possible change to the query answer when a participant withdraws its data from the sensitive database. For subgraph counting, our mechanism provides the first solution to achieve node differential privacy, for any kind of subgraphs.
\end{abstract}

    \section{Introduction}\label{sec:intro}

An important task in data privacy research is to develop mechanisms to publish useful results mined from sensitive database, without disclosing individual privacy. Most of existing techniques provide rather limited privacy protection, since they usually address specific attack models, or rely on specific assumptions about the prior knowledge the potential adversary may possess. In recent years, the paradigm of differential privacy has received increasing attention, because it can provide robust and quantitative privacy guarantee while making no assumptions about the prior knowledge of the adversary. Data publishing algorithms that achieve differential privacy should guarantee that their outputs are randomized such that input databases differing in one participant are almost indistinguishable to the adversary. Therefore, participating in a database is unlikely to cause privacy breach.

Existing studies on differential privacy are mainly based on a simple data model, where the input database is a set of records, and each record corresponds to a participant. The output of a differentially private data publishing algorithm should have almost identical probability distributions for input data sets that differ in exactly one record. Various kinds of queries that compute aggregations on data sets have been considered, and much effort has been put to linear aggregations, on which more complex queries can be built.

The success of most existing differentially private mechanisms relies on the precondition that the maximum possible change to the query answer resulted from the change of one participant should be small and bounded. Such maximum possible change is called the \emph{sensitivity} of the query, which determines the minimum magnitude of noise needed to introduce into the answer. In practice, however, many databases contain information about not only individual participants, but also relationships between them. The change of one participant may, in the worst case, have potentially unlimited impact on the database and the query answer. Queries on such databases are too complex to be tackled by existing techniques. In this paper, we try to relax the precondition by allowing potentially unbounded impact that may be incurred by new participants joining the database, and give an elegant solution.

\subsection{Motivation}

Subgraph counting is an important problem in data mining and social networks, which counts the number of occurrences of a given query subgraph in an input graph. Despite of the enormous works on anonymization schemes for private graphs, little has been down to provide quantitative guarantees of privacy and utility. In~\cite{DBLP:conf/pods/RastogiHMS09}, subgraph counting is studied under a much weaker version of differential privacy. Their privacy guarantee protects only against a specific class of adversaries. The error of the approximate answer returned by their algorithm is large --- the magnitude of noise grows exponentially with the number of edges in the subgraph. In~\cite{DBLP:conf/stoc/NissimRS07} and \cite{DBLP:journals/pvldb/KarwaRSY11}, $k$-triangle and $k$-star counting are studied, and they achieve better privacy and utility guarantee. In particular, they achieve $\epsilon$-differential privacy for $k$-star counting, and $(\epsilon,\delta)$-differential privacy, a weaker version of differential privacy, for $k$-triangle counting. However, their work cannot be extended to other kinds of subgraph. It is also worthy of mentioning the work in \cite{DBLP:conf/icdm/HayLMJ09}, which gives an algorithm for releasing an approximation to the degree distribution
of a graph and achieves $k$-edge differential privacy.

A major problem of the above works is that they can only achieve edge privacy --- each edge corresponds to a participant to be protected. But for many real-world data, such as social networks, each individual participant contributes to the graph a node rather than just an edge. We desire privacy protection based on nodes rather than edges. Unfortunately, it is difficult to achieve node differential privacy while obtaining reasonable query accuracy, because the maximum possible change to the query answer resulted from the change of one node~(as well as all of its incident edges) is comparable to the graph size. Prior to our work, whether any nontrivial graph statistics can be released under node differential privacy with reasonable accuracy is still an open problem~\cite{DBLP:journals/pvldb/KarwaRSY11}. It was widely believed that algorithms achieving node differential privacy can only return query answers that are too noisy for practical applications~\cite{DBLP:conf/sigmod/KiferM11,DBLP:conf/icdm/HayLMJ09}. In this paper, we try to challenge this seemingly impossible task and give a general solution.

In reality, databases usually consist of a number of tables. A participant may contribute tuples to several tables, and a tuple can be contributed collectively by multiple participants. A user may want to issue a SQL query to the database to obtain an output table, then requests approximate statistic of the output table. Subgraph counting is, in fact, a special case of this general context, because every subgraph count can be written as a SELECT query. It will be quite useful if this kind of task can be solved under differential privacy. There have been at least two attempts in the literature~\cite{DBLP:conf/sigmod/McSherry09,Palamidessi12}, which are based on bounding the \emph{global sensitivity} of the query. However, these works support only restricted kinds of join operations, where one participant can affect only constant number of tuples in the output table. Even the most simple subgraph counting requires unrestricted joins where a participant can have unbounded impact on the query answer. Obviously, existing methods are unable to support this kind of joins.

We focuses on the case where the SQL query can be translated into a series of positive relational algebra calculation. We aim at releasing an approximation to a linear statistic of the output table with reasonable accuracy under differential privacy. Our solution covers subgraph counting. Both node and edge differential privacy are achievable, depending on the choice of user. Node differential privacy is stronger than edge differential privacy, but the latter can allow better query accuracy. When nodes or edges of the graph are associated with auxiliary information, our solution also allows arbitrary kinds of constraints imposed on any edges or nodes of the subgraph, which are not supported by prior works.

\subsection{Contributions}

To develop differentially private mechanisms that can support unrestricted joins, we face several difficulties. First, the problem we study allows one participant to have complex impact on the database. The data model assumed by existing differentially private mechanisms is too simple to suffice our need to express the complex relations between the database and the participants. Hence, new data model is needed to express how participants affect the database content. Second, existing notions of \emph{sensitivity}, including global and local sensitivity, are no longer appropriate in our case, because a new participant joining the database can, in the worst case, have unlimited impact on the query answer, leading to unbounded sensitivity. Thus, it is impossible for us to calibrate the noise to such sensitivities. We need a new metric to measure the least magnitude of noise that is necessary to answer a query. Third, existing works for complex queries often compromise privacy guarantee, utility guarantee or efficiency guarantee. However, such compromise can lead to severe problem for practical use, which limit the applications of those techniques. It is a challenging task to develop mechanisms that can achieve all three guarantees.

Contributions of this paper are as follows:

1) We propose a general model of \emph{sensitive databases}, which allows one participant to affect the database content in any possible way. By formalizing the definition of \emph{neighborhood}, the notion of differential privacy on this data model is setting up such that privacy protection is based on individual participants.

2) We propose a new notion of sensitivity, called \emph{empirical sensitivity}, that measures the maximum possible change to the query answer when a participant withdraws its data from the current database content. Empirical sensitivity is always bounded, and is often small. It gives a better measure of the least magnitude of noise that is necessary to answer a query.

3) We develop a general but inefficient mechanism to answer any monotonic query on a sensitive database. This mechanism guarantees $\epsilon$-differential privacy, and the error bound is roughly proportional to the \emph{global empirical sensitivity} of the query.

4) We propose a specific model of \emph{sensitive databases} based on $K$-relation or $c$-table. Every tuple in a $K$-relation is annotated with a positive Boolean expression that specifies its condition of presence. $K$-relation is closed under positive relational algebra calculation. Hence it can be used to express the complex relations between the participants and the table output by a SQL query.

5) We develop an efficient mechanism to answer any linear query to a sensitive $K$-relation. This mechanism guarantees $\epsilon$-differential privacy, and the error bound is roughly proportional to the \emph{universal empirical sensitivity} of the query. The computation cost is in a polynomial of the size of $K$-relation. Our mechanism is the first solution to the problem of subgraph counting for any subgraphs, which can achieve either node differential privacy or edge differential privacy, and the error bound is roughly proportional to the \emph{local empirical sensitivity} of the query.

6) We conduct extensive experiments to evaluate the proposed mechanism. Experimental results validate the effectiveness and efficiency of the new mechanism.

In Fig.~\ref{fig:intro:compare} we present a brief comparison between our mechanism and existing mechanisms.

\begin{figure*}[!t]
    \centering
    \begin{tabular}{|m{0.32\textwidth}|m{0.3\textwidth}|m{0.32\textwidth}|}
    \hline
    Queries & Our mechanism & Existing mechanisms \\\hline
    Monotonic query on a sensitive database &
    $\widetilde{O}(\widetilde{GS}_q/\epsilon)$ error, $\EXP(|P|)$ time &
    None \\\hline

    Linear statistic of the output of a SQL query &
    $\widetilde{O}(\widetilde{US}_q/\epsilon)$ error, $\POLY(|P|,|R|)$ time &
    \parbox{.32\textwidth}{$O(US_q/\epsilon)$ error and $O(1)$ time if there are no unrestricted joins~\cite{DBLP:conf/sigmod/McSherry09,Palamidessi12}\\ Not solvable if there are unrestricted joins because $US_q\geq GS_q=+\infty$} \\\hline

    triangle counting (\includegraphics[height=10pt]{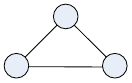}) &
    $\widetilde{O}(\widetilde{LS}_q/\epsilon)$ error, $\POLY(k,|R|)$ time &
    $O(LS_q/\epsilon+1/\epsilon^2)$ error; $O(|V|\cdot|E|)$ time; only achieve differential privacy based on edges~\cite{DBLP:conf/stoc/NissimRS07} \\\hline

    $k$-star counting (e.g., 3-star \includegraphics[height=10pt]{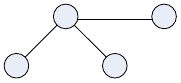}) &
    $\widetilde{O}(\widetilde{LS}_q/\epsilon)$ error, $\POLY(k,|R|)$ time or $\POLY(|V|,|E|,k)$ time &
    $O(LS_q/\epsilon)$ error if $1/\epsilon=O(d_{\max}/k)$; $O(|V|\cdot|E|)$ time; only achieve differential privacy based on edges~\cite{DBLP:journals/pvldb/KarwaRSY11} \\\hline

    $k$-triangle counting (e.g., 3-triangle \includegraphics[height=10pt]{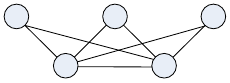}) &
    $\widetilde{O}(\widetilde{LS}_q/\epsilon)$ error, $\POLY(k,|R|)$ time or $\POLY(|V|,|E|,k)$ time &
    $O(LS_q/\epsilon)$ error if $\ln(1/\delta)/\epsilon=O(a_{\max})$; $O(|V|\cdot|E|)$ time; only achieve $(\epsilon,\delta)$-differential privacy based on edges~\cite{DBLP:journals/pvldb/KarwaRSY11} \\\hline

    $k$-node $l$-edge connected subgraph counting &
    $\widetilde{O}(\widetilde{LS}_q/\epsilon)$ error, $\POLY(k,l,|R|)$ time &
    $\Theta((k l^2\log |V|)^{l-1}/\epsilon)$ error; $O(1)$ time; only achieve adversary privacy based on edges w.r.t. a specific class of adversaries~\cite{DBLP:conf/pods/RastogiHMS09} \\\hline

    \end{tabular}

    \caption{Comparison between our mechanism and existing mechanisms. $\widetilde{O}$ means that logarithmic factors are omitted. For sensitive database, $|P|$ denotes the number of participants and $|\SUPP(R)|$ denotes the number of tuples returned by the SQL query. For subgraph counting, $|V|$ and $|E|$ denote the number of nodes and edges in the graph, and $|R|=|\SUPP(R)|$ denotes the true query answer. $d_{\max}$ denotes the maximum degree of a node, and $a_{\max}$ denotes the maximum number of common neighbors of a pair of nodes. $GS$, $LS$, $US$, $\widetilde{GS}$, $\widetilde{LS}$ and $\widetilde{US}$ are explained in Sec.~\ref{sec:pre} and Sec.~\ref{sec:problem}. We have $\widetilde{LS}_q\leq LS_q$ and $\widetilde{US}_q\leq US_q$. Note that we do not take account of the time needed for generating the output table or the list of matched subgraphs in the computation cost. For subgraph counting our solution can achieve differential privacy based on either nodes or edges, depending on the choice of user.}
    \label{fig:intro:compare}
\end{figure*}

    \section{Preliminaries}
\label{sec:pre}

\subsection{Privacy and Utility}

In this work, we will use differential privacy~\cite{DBLP:conf/tcc/DworkMNS06}, a state-of-the-art paradigm for privacy preserving data publishing. A randomized algorithm is differentially private if it yields nearly identical distributions over its outcomes when running on neighboring databases.

\begin{definition}[Differential Privacy]
    A randomized algorithm $\mathcal{A}$ is $(\epsilon,\delta)$-differentially private if for any pair of neighboring databases $D$, $D'$, and for any set of possible outputs $S\subseteq Range(\mathcal{A})$,
    \begin{equation}
        \PR[\mathcal{A}(D)\in S]\leq e^\epsilon \cdot \PR[\mathcal{A}(D')\in S]+\delta
    \end{equation}
    where the probability is taken over the randomness of $\mathcal{A}$. When $\delta=0$, the algorithm is $\epsilon$-differentially private.
\end{definition}
All algorithms presented in this paper satisfy $\epsilon$-differential privacy.

The definition of \emph{neighboring} depends on the context or application. Usually, $D$ and $D'$ are said to be neighboring if they differ only by one participant. In this case, a differentially private algorithm can protect against disclosure of any participant. In the literature, the database $D$ is often considered as a multiset of records, where each record corresponds to a particular participant, then $D$ and $D'$ are neighboring if $|D-D'|+|D'-D|=1$.

We are interested in queries that are real-valued functions of the database (though other kinds of queries are also important). A differentially private algorithm must introduce randomness to its output, and the answer is never exact. The utility of the algorithm is measured by how accurate its answer is. %We use the following definition.

\begin{definition}[$(\epsilon,\delta)$-Accurate]
    For a database $D$, a query $q$ and the true answer $q(D)$, we say that the answer returned by an algorithm $\mathcal{A}$ is $(\epsilon,\delta)$-accurate if
    \begin{equation}
        \PR[|\mathcal{A}(D)-q(D)|>\epsilon] \leq \delta
    \end{equation}
\end{definition}

\subsection{Global Sensitivity}

A well known approach to achieve differential privacy is Laplace mechanism~\cite{DBLP:conf/tcc/DworkMNS06}, which introduces i.i.d. noises into the query answers. The magnitude of noise is calibrated to the \emph{sensitivity} of the query --- a property of the query that measures the maximum possible change to the true answer caused by a small change in the database.

\begin{definition}[Global Sensitivity]
    For a real-valued function $q:\mathbb{D}\rightarrow\mathbb{R}^m$, the (global) sensitivity of $q$ is
    \begin{equation}
        GS_q=\max_{D,D'\in\mathbb{D}}\|q(D)-q(D')\|_1
    \end{equation}
    where the maximum is taken over all pairs of neighboring databases $D$, $D'$.
\end{definition}

Given a database $D\in\mathbb{D}$, a query sequence $q:\mathbb{D}\rightarrow\mathbb{R}^m$ and a parameter $\epsilon>0$, Laplace mechanism $\mathcal{A}$ returns $\mathcal{A}(D)=q(D)+(Y_1,\dots,Y_m)$, where $Y_i$ are i.i.d. random variables that follow Laplace distribution $\LAP(GS_q/\epsilon)$, which has the following probability density function
\begin{equation}
    \LAP(y|b)=\frac{1}{2b}\exp(-\frac{|y|}{b})
\end{equation}

Laplace mechanism satisfies $\epsilon$-differential privacy. It is easy to verify that Laplace mechanism returns $(c GS_q/\epsilon,e^{-c})$-accurate answer to each query in the sequence $q$, for any $c>0$.

\subsection{Local Sensitivity and Smooth Sensitivity}

In Laplace mechanism, the magnitude of noise depends on $GS_q$ and the parameter $\epsilon$, but not on the database $D$. Since the global sensitivity $GS_q$ measures the impact of a participant on the true answer in the worst case, this often introduces unnecessarily large noise. In \cite{DBLP:conf/stoc/NissimRS07}, a local measure of sensitivity was proposed

\begin{definition}[Local Sensitivity]
    For a real-valued function $q:\mathbb{D}\rightarrow\mathbb{R}^m$ and a database $D\in\mathbb{D}$, the local sensitivity of $q$ at $D$ is
    \begin{equation}
        LS_q(D)=\max_{D'}\|q(D)-q(D')\|_1
    \end{equation}
    where the maximum is taken over the neighborhood of $D$.
\end{definition}

Observing that $GS_q=\max_D LS_q(D)$, we know that $LS_q(D)$ never exceeds $GS_q$. Ideally, we would like to release $q(D)$ with noise magnitude proportional to $LS_q(D)$, but the noise magnitude might leak information and differential privacy is not satisfied. \cite{DBLP:conf/stoc/NissimRS07} proposed that the noise magnitude should be calibrated to a smooth upper bound $S$ on the local sensitivity, namely, a function $S$ that is an upper bound on $LS_f$ at all point and such that $\ln(S(\cdot))$ has low global sensitivity. \cite{DBLP:conf/stoc/NissimRS07} presents algorithms to compute the optimal $S$, called the \emph{smooth sensitivity} of $q$, for a variety of queries.

\subsection{$K$-Relation and $c$-Table}
\label{sec:pre:krelation}

Our work addresses aggregation on relations where each tuple could be contributed by multiple participants, and each participant could contribute multiple tuples. To track which participants contribute a tuple and how they contribute, we use $K$-relation~\cite{DBLP:conf/pods/GreenKT07} or $c$-table~\cite{DBLP:journals/jacm/ImielinskiL84}, a model proposed in the field of uncertain databases, where tuples are annotated (tagged) with their provenance information, and positive relational algebra is generalized to such tagged-tuple relations. Here we briefly review $K$-relation and $c$-table.

Let $U$ be a finite set of attributes and $\mathbb{C}$ a domain of values, then each tuple is a function $t:U\rightarrow\mathbb{C}$. The set of all such $U$-tuples is denoted by $U\TUP$. Relations without annotations are just subsets of $U\TUP$. Tuples in a $K$-relation are annotated with elements from a semiring $(K,+,\cdot,0,1)$. A $K$-relation over $U$ is a function $R:U\TUP\rightarrow K$ with a finite support $\SUPP(R)=\{t|R(t)\neq 0\}$. The operations of positive algebra on $K$-relation are defined as follows~\cite{DBLP:conf/pods/GreenKT07}:
\begin{description}
\item[empty relation]
    For any set of attributes $U$, there is $\emptyset:U\TUP\rightarrow K$ such that $\emptyset(t)=0$ for all $t$.
\item[union]
    For $R_1,R_2:U\TUP\rightarrow K$, $R_1\cup R_2:U\TUP\rightarrow K$ is defined by
    \[
        (R_1\cup R_2)(t)=R_1(t)+R_2(t)
    \]
\item[projection]
    For $R:U\TUP\rightarrow K$ and $V\subseteq U$, $\pi_V R:V\TUP\rightarrow K$ is defined by
    \[
        (\pi_V R)(t)=\sum_{\text{$t=t'$ on $V$ and $R(t')\neq 0$}}R(t')
    \]
\item[selection]
    For $R:U\TUP\rightarrow K$ and a selection predicate $P:U\TUP\rightarrow\{0,1\}$, $\sigma_P R:U\TUP\rightarrow K$ is defined by
    \[
        (\sigma_P R)(t)=R(t)\cdot P(t)
    \]
\item[natural join]
    For $R_i:U_i\TUP\rightarrow K$, $i=1,2$, $R_1\bowtie R_2:(U_1\cup U_2)\TUP\rightarrow K$ is defined by
    \[
        (R_1\bowtie R_2)(t)=R_1(t_1)\cdot R_2(t_2)
    \]
    where $t_1=t$ on $U_1$ and $t_2=t$ on $U_2$.
\item[renaming]
    For $R:U\TUP\rightarrow K$ and a bijection $\beta:U\rightarrow U'$, $\rho_\beta R:U'\TUP\rightarrow K$ is defined by
    \[
        (\rho_\beta R)(t)=R(t\circ\beta)
    \]
\end{description}
\emph{Intersection} and \emph{cartesian product} are just special cases of natural join. But \emph{difference} is not supported in positive relational algebra.

We study differentially private aggregation on a $c$-table, which is a special case of $K$-relation where $K$ makes up of positive Boolean expressions over some set $B$ of variables. The term \emph{positive} means that the expressions do not involve negation ($\neg$), but only disjunction ($\vee$), conjunction~($\wedge$) and constants $\TRUE$ and $\FALSE$. In our work, each variable in $B$ may correspond to a (potential) participant being protected, then the Boolean expression annotated with a tuple $t$ gives the condition of $t$ being presented in the relation when some participants may opt out.

In $c$-table or $K$-relation, expressions that yield the same truth-value for all valuation of variables in $B$ are considered equivalent. But this is not applicable to our work. An expression $(b_1\vee b_2)\wedge(b_1\vee b_3)$ cannot be simply rewritten into $b_1\vee(b_2\wedge b_3)$. Such rewriting could make our mechanism fail to satisfy differential privacy. We will review this issue later.

    \section{Problem Formulation}
\label{sec:problem}

\subsection{Sensitive Databases and Monotonic Queries}

\label{sec:problem:sendb}

In the literature of differential privacy, a \emph{sensitive database} is typically considered as a multiset of records, and the privacy is defined by the indistinguishability between data sets that differ by only one record. But this definition of privacy is no longer appropriate in our case, where each participant could have complex effect on the database. To achieve differential privacy in our setting, we need to know about not only the content of the database, but also how it changes if some participants withdraw their data. A sensitive database being released should contain such self-descriptive information. We propose a new definition of \emph{sensitive database}, as below, which is more general.

\begin{definition}[Sensitive Database]
    A sensitive database is an ordered pair $(P,M)$, where $P$ is finite set of participants contributing the data, and $M$ is a function $M:\mathcal{P}(P)\rightarrow\mathbb{D}$ such that $M(P')$ is the content of the database if only participants in $P'$ contribute their data.
\end{definition}

Once sensitive databases are formalized, we are ready to adapt the notion of differential privacy to them by making clear what sensitive databases are considered neighboring with each other. We say that two sensitive databases are neighboring if one database can be obtained from the other by one participant withdrawing its data.

\begin{definition}[Neighboring]
    Two sensitive databases $(P_1,M_1)$ and $(P_2,M_2)$ are neighboring if $|P_1-P_2|+|P_2-P_1|=1$ and $M_1(P')=M_2(P')$ for all $P'\subseteq P_1\cap P_2$.
\end{definition}

\begin{definition}[Ancestor]
    We say that $(P_1,M_1)$ is an ancestor of $(P_2,M_2)$, denoted by $(P_1,M_1)\preceq(P_2,M_2)$, if $P_1\subseteq P_2$ and $M_1(P')=M_2(P')$ for all $P'\subseteq P_1$.
\end{definition}

We postulate a class $\Omega$ of sensitive databases, such that every possible sensitive database being considered is an element of $\Omega$. Moreover, if $(P,M)\in\Omega$, then all ancestors of $(P,M)$ are also elements of $\Omega$. We make a further assumption that there is a special element $D_0$ in $\mathbb{D}$ such that $M(\emptyset)=D_0$ for all $(P,M)\in\Omega$ (otherwise, $\Omega$ comprises disconnected parts).

For a sensitive database $(P,M)$, a query $q$ takes as input $M(P)$, the current content of the database, and outputs $q(M(P))$. In this paper, we address queries that output a real number and are monotonic.

\begin{definition}[Monotonic Query]
    For a class $\Omega$ of sensitive databases, a query $q:\mathbb{D}\rightarrow\mathbb{R}$ is monotonic if both of following hold:
    \begin{itemize}
    \item
        $q(D_0)=0$
    \item
        $q(M_1(P_1))\leq q(M_2(P_2))$ for all $(P_1,M_1)\preceq(P_2,M_2)$
    \end{itemize}
\end{definition}

If the global sensitivity of a query is low, then Laplace mechanism can still be applied to obtaining differentially private answer with reasonable accuracy. In many applications, however, the change of a participant could, in the worst case, incur excessive or even unlimited impact on the database content as well as the query answer. No existing differentially private techniques can process queries with unbounded global/local sensitivity. Hence, global/local sensitivity is no longer an appropriate quantity to measure the necessary amount of noise introduced into the query answer. We propose a new notion of sensitivity, \emph{empirical sensitivity}, which suffices our need.

\begin{definition}[Local Empirical Sensitivity]
    For a real-valued function $q:\mathbb{D}\rightarrow\mathbb{R}^m$ and a sensitive database $(P,M)$, the local empirical sensitivity of $q$ at $(P,M)$ is
    \begin{equation}
        \widetilde{LS}_q(P,M)=\max_{p\in P}\|q(M(P))-q(M(P-\{p\}))\|_1
    \end{equation}
    If $P=\emptyset$, then $\widetilde{LS}_q(P,M)=0$.
\end{definition}

\begin{definition}[Global Empirical Sensitivity]
    For a real-valued function $q:\mathbb{D}\rightarrow\mathbb{R}^m$ and a sensitive database $(P,M)$, the global empirical sensitivity of $q$ at $(P,M)$ is
    \begin{equation}
        \widetilde{GS}_q(P,M)=\max_{(P',M')\preceq(P,M)}\widetilde{LS}_q(P',M')
    \end{equation}
\end{definition}

Empirical sensitivity measures the maximum possible change to the query answer when a participant opts out. It is obvious that $\widetilde{LS}_q(P,M)\leq LS_q(M(P))\leq GS_q$ and $\widetilde{LS}_q(P,M)\leq\widetilde{GS}_q(P,M)\leq GS_q$.

\subsection{Linear Queries on Sensitive Relations}

\label{sec:problem:senrel}

Although the model of sensitive databases and monotonic queries is general, it may be too general to allow efficient mechanism for obtaining differentially private answer. We are in particular interested in a special class of monotonic queries that compute linear aggregation on a relation, and the relation is itself a function of the sensitive database.

\begin{definition}
    A linear query $q$ on sensitive database is a function $q=q_+\circ q_*$, where $q_*:\mathbb{D}\rightarrow\mathcal{P}(U\TUP)$ and $q_+:\mathcal{P}(U\TUP)\rightarrow\mathbb{R}$, such that $q_*$ transforms a database $D\in\mathbb{D}$ into a finite set of tuples~(e.g. by some relational algebra calculation), and $q_+$ is a linear function: $q_+(T)=\sum_{t\in T}q_+(t)$.
\end{definition}

Note that the output of $q_*$ must be finite, although the space $U\TUP$ can be infinite.

To ensure that a linear query $q$ is monotonic, we pose some limitations on the functions $q_*$ and $q_+$. First, we require that introducing a new participant into a sensitive database never results in removal of any tuple from the relation output by $q_*$. Second, we assume that $q_+$ is nonnegative.

\begin{definition}
    A linear query $q=q_+\circ q_*$ on sensitive database is monotonic if the following hold:
    \begin{itemize}
    \item
        $q_*(M_1(P_1))\subseteq q_*(M_2(P_2))$ for all $(P_1,M_1)\preceq(P_2,M_2)$
    \item
        $q_+(T)\geq 0$ for all finite $T\subseteq U\TUP$
    \end{itemize}
\end{definition}

If we want to answer a linear function $q_+$ that may yield negative output, we can decompose it into two nonnegative components and compute them individually: $q_+(t)=\max(0,q_+(t))-\max(0,-q_+(t))$.

Because we focus on a single query, where $q_*$ is fixed, we can construct a class of virtual sensitive databases $\Omega'=\{(P,M')\}$, such that each $(P,M)$ in $\Omega$ is mapped into a virtual one $(P,M')$ where $M'=q_*\circ M$. Then $M'(P)$ is a set of tuples and the query $q=q_+\circ q_*$ is just a linear function that computes $q_+(M'(P))$. The monotonicity of $q_*$ transmits to the monotonicity of $M'$. We call such $(P,M')$ a \emph{sensitive relation}.

\begin{definition}
    A sensitive relation $(P,M)$ is a sensitive database with $M:\mathcal{P}(P)\rightarrow\mathcal{P}(U\TUP)$, and $M(P)$ must be finite. A class $\Omega$ of sensitive relations is monotonic if $M_1(P_1)\subseteq M_2(P_2)$ for all $(P_1,M_1)\preceq (P_2,M_2)$ in $\Omega$.
\end{definition}

In this subsection and most parts of this paper, we study nonnegative linear queries for a monotonic class of sensitive relations.

To obtain a differentially private answer to a query $q$ on a relation $T=M(P)$, it should specify how the relation $T$ is affected by its contributors $P$. In particular, we want to know for each tuple in $T$ the condition of its presence if some participants may opt out. The definition of the function $M$ is too general to be efficiently handled in practice. Therefore, we propose to represent $M$ as a $c$-table or $K$-relation $R$, where each tuple $t$ is annotated with a positive Boolean expression $R(t)$ that specifies its condition of presence. Each variable $p$ in an expression indicates whether the participant $p\in P$ would contribute its data. A sensitive relation represented as a $K$-relation is called a sensitive $K$-relation, denoted by $(P,R)$.

For a query $q$, an algorithm may first transform the original sensitive database $(P,M)$ into a sensitive $K$-relation $(P,R)$ in a flexible way. For the correctness of the differentially private mechanism, however, the transformation should guarantee that for any neighboring sensitive databases the corresponding sensitive $K$-relations are also neighboring. The concept of \emph{neighboring} for sensitive $K$-relations is defined by as follows.

\begin{definition}
    \label{def:problem:krelationneighboring}
    Given an equivalence relation $\sim$ on $K$, two sensitive $K$-relations $(P_1,R_1)$ and $(P_2,R_2)$, where $P_2=P_1\cup\{p\}$, $p\notin P_1$, are neighboring if $R_1(t)\sim R_2(t)_{|p\rightarrow\FALSE}$ for all $t\in U\TUP$, where $R_2(t)_{|p\rightarrow\FALSE}$ denotes an operation that replaces all occurrences of the variable $p$ in $R_2(t)$ with constant $\FALSE$.
\end{definition}

An issue in the above definition is that it does not specify what kinds of Boolean expressions in $K$ are equivalent. A necessary condition for two expressions being equivalent is that they must yield the same truth-value for all valuation of variables. The way we write the expressions may, or may not matter, depending on the particular algorithms being used. For example, the inefficient mechanism presented in Sec.~\ref{sec:framework:inefficient} is independent of the form of expressions, so expressions that yield the same truth table are equivalent. On the other hand, the efficient mechanism presented in Sec.~\ref{sec:relaxation} relies on the way we write an expression. We will discuss this in Sec.~\ref{sec:relaxation}.

In Fig.~\ref{fig:problem:example} we present simple examples of $K$-relations that are produced by different queries to a graph. Fig.~\ref{fig:problem:example}(a) is a subgraph counting, while Fig.~\ref{fig:problem:example}(b) is a more complicated query.

\begin{figure*}[!t]
    \centering

    \mbox{
    \begin{tabular}[b]{c}
        \parbox{0.2\textwidth}{\includegraphics[width=0.2\textwidth]{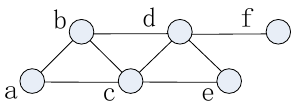}} \\
        social network (graph) \\
        \begin{tabular}{c}
            \begin{tabular}{|c|c|}
                \hline
                $t$ & $R(t)$ \\\hline
                $abc$ & $a\wedge b\wedge c$ \\\hline
                $bcd$ & $b\wedge c\wedge d$ \\\hline
                $cde$ & $c\wedge d\wedge e$ \\\hline
            \end{tabular} \\
            node differential privacy
        \end{tabular}
        \begin{tabular}{c}
            \begin{tabular}{|c|c|}
                \hline
                $t$ & $R(t)$ \\\hline
                $abc$ & $e_{ab}\wedge e_{ac}\wedge e_{bc}$ \\\hline
                $bcd$ & $e_{bc}\wedge e_{bd}\wedge e_{cd}$ \\\hline
                $cde$ & $e_{cd}\wedge e_{ce}\wedge e_{de}$ \\\hline
            \end{tabular}\\
            edge differential privacy
        \end{tabular} \\
        (a) How many triangles in a social network
    \end{tabular}

    \begin{tabular}[b]{c}
        \begin{tabular}{c}
            \begin{tabular}{|c|c|}
                \hline
                $t$ & $R(t)$ \\\hline
                $ab$ & $a\wedge b\wedge c$ \\\hline
                $ac$ & $a\wedge c\wedge b$ \\\hline
                $bc$ & $b\wedge c\wedge (a\vee d)$ \\\hline
                $bd$ & $b\wedge d\wedge c$ \\\hline
                $cd$ & $c\wedge d\wedge (b\vee e)$ \\\hline
                $ce$ & $c\wedge e\wedge d$ \\\hline
                $de$ & $d\wedge e\wedge c$ \\\hline
            \end{tabular} \\
            node differential privacy
        \end{tabular}
        \begin{tabular}{c}
            \begin{tabular}{|c|c|}
                \hline
                $t$ & $R(t)$ \\\hline
                $ab$ & $e_{ab}\wedge e_{ac}\wedge e_{bc}$ \\\hline
                $ac$ & $e_{ac}\wedge e_{ab}\wedge e_{bc}$ \\\hline
                $bc$ & $e_{bc}\wedge ((e_{ab}\wedge e_{ac})\vee (e_{bd}\wedge e_{cd}))$ \\\hline
                $bd$ & $e_{bc}\wedge e_{bd}\wedge e_{cd}$ \\\hline
                $cd$ & $e_{cd}\wedge ((e_{bc}\wedge e_{bd})\vee (e_{ce}\wedge e_{de}))$ \\\hline
                $ce$ & $e_{ce}\wedge e_{cd}\wedge e_{de}$ \\\hline
                $de$ & $e_{de}\wedge e_{cd}\wedge e_{ce}$ \\\hline
            \end{tabular}\\
            edge differential privacy
        \end{tabular} \\
        (b) How many pairs of friends that have a common friend
    \end{tabular}
    }
    \caption{Examples of $K$-relations}
    \label{fig:problem:example}
\end{figure*}

Finally, we introduce a variant of empirical sensitivity, which is relevant to the error bound of our mechanism.

\begin{definition}[Impact]
    For a sensitive $K$-relation $(P,R)$ and a participant $p\in P$, the impact of $p$ at $R$ is
    \begin{equation}
        \IMPACT(p,R)=\{t:R(t)\not\sim R(t)_{|p\rightarrow\FALSE}\}
    \end{equation}
\end{definition}

\begin{definition}[Universal Empirical Sensitivity]
    For a sensitive $K$-relation $(P,R)$, a participant $p\in P$ and a nonnegative linear query $q$, the universal empirical sensitivity of $q$ for a participant $p$ at $R$ is
    \begin{equation}
        \widetilde{US}_q(p,R)=\sum_{t\in\IMPACT(p,R)}q(t)
    \end{equation}
    For a sensitive $K$-relation $(P,R)$ and a nonnegative linear query $q$, the universal empirical sensitivity of $q$ at $(P,R)$ is
    \begin{equation}
        \widetilde{US}_q(P,R)=\max_{p\in P}\widetilde{US}_q(p,R)
    \end{equation}
\end{definition}

%\begin{definition}[Universal Empirical Sensitivity]
%    For a sensitive $K$-relation $(P,R)$ and a nonnegative linear query $q$, the universal empirical sensitivity of $q$ at $(P,R)$ %is
%    \begin{equation}
%        \widetilde{US}_q(P,R)=\max_{p\in P}\widetilde{US}_q(p,R)
%    \end{equation}
%\end{definition}

When $q(t)=1$ for all $t$, $\widetilde{US}_q(p,R)$ measures how many tuples in $R$ have $p$ appearing in their annotated expressions. The error bound of our mechanism presented in Sec.~\ref{sec:relaxation} is roughly proportional to the universal empirical sensitivity $\widetilde{US}_q$.

    \section{The Recursive Mechanism Framework}
\label{sec:framework}

In this section, we first present the framework of a novel differential privacy mechanism, \emph{recursive mechanism}, which can answer any monotonic queries on any sensitive databases. Then, we give a general but inefficient implementation of the mechanism.

\subsection{The Basic Framework}

Our mechanism is based on two special sequences, $H_0(P,M)\cdots H_{|P|}(P,M)$ and $G_0(P,M)\cdots G_{|P|}(P,M)$, as functions of the sensitive database $(P,M)$ in $\Omega$. We call $H$ a recursive sequence, which should satisfy the conditions given by the following definition.

\begin{definition}[Recursive Sequence]
    A sequence, $H_0(P,M)\ldots H_{|P|}(P,M)$, as a function on $\Omega$, is called a recursive sequence if the following conditions hold:
    \begin{itemize}
    \item
        $H_0(P,M)=0$ for all $(P,M)\in\Omega$
    \item
        (Recursive Monotonicity) $H_i(P_2,M_2)\leq H_i(P_1,M_1)\leq H_{i+1}(P_2,M_2)$ for all neighboring $(P_1,M_1)\preceq (P_2,M_2)$ in $\Omega$ and $0\leq i\leq |P_1|$
    \end{itemize}
\end{definition}

We call $G$ a bounding sequence of $H$, which is also a recursive sequence but satisfies some additional condition.

\begin{definition}[Bounding Sequence]
    For a recursive sequence $H$ and $g\geq 1$, a sequence, $G_0(P,M)\ldots G_{|P|}(P,M)$, as a function on $\Omega$, is called a $g$-bounding sequence of $H$, if the following conditions hold:
    \begin{itemize}
    \item
        $G$ is a recursive sequence
    \item
        $H_j(P,M)\leq H_i(P,M)+(|P|-i) G_k(P,M)$ for all $(P,M)\in\Omega$ and all $0\leq i\leq j\leq |P|$ and $k=|P|-\lfloor(|P|-j)/g\rfloor$
    \end{itemize}
    If $g=1$, we simply say $G$ is a bounding sequence of $H$.
\end{definition}

The framework of our mechanism consists of three steps
\begin{enumerate}
\item
    For a monotonic query $q$, we construct a recursive sequence $H$ and a $g$-bounding sequence $G$ of $H$ such that $H_{|P|}(P,M)=q(M(P))$ for all $(P,M)\in\Omega$.
\item
    Based on $G$, find a quantity $\Delta$ such that $\Delta$ approximates $G_{|P|}(P,M)$ or the empirical sensitivity of $q$, and $\ln\Delta$ has low global sensitivity, then we add multiplicative noise to $\Delta$, obtaining $\widehat{\Delta}$, which satisfies differential privacy.
\item
    Based on $H$, find a quantity $X$ such that $X$ approximates the true answer $H_{|P|}(P,M)$, and $X$ has global sensitivity $\widehat{\Delta}$, then we add Laplace noise to $X$, obtaining $\widehat{X}$, which satisfies differential privacy.
\end{enumerate}

The concrete construction of $H$ and $G$ are omitted here. We focus on Step 2 and 3 in this subsection. In the remainder of this paper, we will omit the argument $(P,M)$ when the context is clear.

For a sensitive database $(P,M)$ and parameters $\beta>0$ and $\theta>0$, we compute $\Delta$ as following
\begin{equation}
    \Delta=\min\{e^{i\beta}\theta:G_{|P|-i}\leq e^{i\beta}\theta\}
\end{equation}

We can observe several important properties of $\Delta$. In the sequel, all proofs of lemmas and theorems are moved to the appendix.

\begin{lemma}
    $GS_{\ln\Delta}\leq\beta$.
\end{lemma}

\begin{lemma}
    $\Delta\leq\max\{\theta,e^\beta G_{|P|}\}$.
\end{lemma}

\begin{lemma}
    $G_{|P|-\ln(\frac{\Delta}{\theta})/\beta}\leq\Delta$.
\end{lemma}

Because $\ln\Delta$ has low global sensitivity, we can add Laplace noise to $\ln\Delta$ to obtain a noisy version $\widehat\Delta$ that satisfies differential privacy. For parameter $\epsilon_1>0$ and $\mu>0$, we compute $\widehat{\Delta}=e^{\mu+Y}\Delta$, where $Y\sim\LAP(\beta/\epsilon_1)$. This finishes Step 2, and $\widehat{\Delta}$ has several properties.

\begin{lemma}
    The release of $\widehat{\Delta}$ satisfies $\epsilon_1$-differential privacy.
\end{lemma}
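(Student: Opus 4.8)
The plan is to reduce the release of $\widehat\Delta$ to a single instance of the Laplace mechanism applied to the scalar function $\ln\Delta$, and then invoke the closure of differential privacy under post-processing. The key observation is that multiplicative noise on $\Delta$ becomes additive noise on its logarithm: since $\widehat\Delta = e^{\mu+Y}\Delta$ with $Y\sim\LAP(\beta/\epsilon_1)$, we have
\[
    \ln\widehat\Delta = \mu + \ln\Delta + Y,
\]
so releasing $\ln\widehat\Delta$ is exactly the Laplace mechanism run on the real-valued query $\ln\Delta$ with noise scale $b=\beta/\epsilon_1$, followed by the deterministic shift $\mu$. Because $\Delta\mapsto\ln\Delta$ and its inverse $z\mapsto e^z$ are fixed bijections independent of the database, it suffices to prove that the release of $\ln\widehat\Delta$ is $\epsilon_1$-differentially private; the announced statement then follows by post-processing.

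First I would fix an arbitrary pair of neighboring sensitive databases and write $a=\ln\Delta(P_1,M_1)$ and $a'=\ln\Delta(P_2,M_2)$. By Lemma~1 we have $GS_{\ln\Delta}\le\beta$, hence $|a-a'|\le\beta$. Next I would compare the output densities directly. The density of $\ln\widehat\Delta$ at a point $z$ under the first database is $\frac{1}{2b}\exp(-|z-\mu-a|/b)$, and similarly with $a'$ under the second; the deterministic shift $\mu$ is identical in both and therefore cancels in the ratio. Applying the triangle inequality $|z-\mu-a'|-|z-\mu-a|\le|a-a'|$ gives
\[
    \frac{\exp(-|z-\mu-a|/b)}{\exp(-|z-\mu-a'|/b)}
    \le \exp\!\left(\frac{|a-a'|}{b}\right)
    \le \exp\!\left(\frac{\beta}{\beta/\epsilon_1}\right)
    = e^{\epsilon_1}.
\]
Integrating this pointwise density bound over any measurable output set $S$ yields $\PR[\ln\widehat\Delta(P_1,M_1)\in S]\le e^{\epsilon_1}\PR[\ln\widehat\Delta(P_2,M_2)\in S]$, which is $\epsilon_1$-differential privacy for the release of $\ln\widehat\Delta$, and hence (via post-processing) for $\widehat\Delta$ itself.

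There is no genuinely hard step here: the entire argument is the standard Laplace-mechanism calibration, and the only things requiring care are the reduction from multiplicative to additive noise and the bookkeeping that $\mu$ is a data-independent constant that drops out of the likelihood ratio. The one substantive input is Lemma~1, whose bound $GS_{\ln\Delta}\le\beta$ is precisely what makes the noise scale $\beta/\epsilon_1$ the correct calibration. I would be mildly careful to state explicitly that the maximum in the global-sensitivity bound is taken over \emph{neighboring} databases (matching the quantifier in the privacy definition), so that the per-pair comparison above is exactly what the definition of $(\epsilon_1,0)$-differential privacy requires.
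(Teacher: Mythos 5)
Your proof is correct and is exactly the standard argument the paper has in mind: the paper omits a proof for this lemma entirely (it skips the counter in the appendix), implicitly relying on Lemma~1's bound $GS_{\ln\Delta}\leq\beta$ to reduce the release to the Laplace mechanism on $\ln\Delta$ with scale $\beta/\epsilon_1$, followed by the data-independent post-processing $z\mapsto e^{\mu+z}$. Your density-ratio calculation and the observation that $\mu$ cancels are precisely the details being elided, so there is nothing to correct.
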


\begin{lemma}
    $\PR[\widehat{\Delta}>e^{\mu+c}\Delta]\leq\frac{1}{2}e^{-c\epsilon_1/\beta}$ for any $c>0$.
\end{lemma}

\begin{lemma}
    $\PR[\widehat{\Delta}<\Delta]\leq\frac{1}{2}e^{-\mu\epsilon_1/\beta}$.
\end{lemma}

In Step 3, we first find a quantity $X$ such that $X$ approximates $H_{|P|}(P,M)$ and $GS_X\leq\widehat{\Delta}$. We compute $X$ as
\begin{equation}
    X=\min\{H_i+(|P|-i)\widehat{\Delta}:0\leq i\leq |P|\}
\end{equation}

We have several properties of $X$.

\begin{lemma}
    For any fixed $\widehat{\Delta}\geq 0$, $GS_X\leq\widehat{\Delta}$.
\end{lemma}

\begin{lemma}
    If $\widehat{\Delta}\geq\Delta$, then $H_{|P|-g\ln(\frac{\Delta}{\theta})/\beta}\leq X\leq H_{|P|}$.
\end{lemma}

For parameter $\epsilon_2>0$, our mechanism releases $\widehat{X}=X+Y$, where $Y\sim\LAP(\widehat{\Delta}/\epsilon_2)$. We give the privacy and utility guarantees in the following theorem.

\begin{theorem}
    For parameters $\epsilon_1>0$, $\epsilon_2>0$, $\beta>0$, $\theta>0$ and $\mu>0$, recursive mechanism, as described above, satisfies $(\epsilon_1+\epsilon_2)$-differential privacy, and is $(e^{2\mu}\Delta^* c/\epsilon_2+
    g\lceil \ln(\frac{\Delta^*}{\theta})/\beta\rceil G_{|P|},
    e^{-\mu\epsilon_1/\beta}+
    e^{-c})$-accurate for any $c>0$, where $\Delta^*=\max\{\theta,e^\beta G_{|P|}\}$.
    If $\epsilon_1=\Theta(\epsilon)$, $\epsilon_2=\Theta(\epsilon)$, $\beta=\epsilon_1/k$, and $\theta$ and $\mu$ are constants, then the mechanism is $(O(k\ln(G_{|P|})G_{|P|}/\epsilon),2e^{-k\mu})$-accurate as $\epsilon\rightarrow 0$,$k\rightarrow\infty$ and $G_{|P|}\rightarrow\infty$.
\end{theorem}

The error bound of recursive mechanism is roughly proportional to $G_{|P|}$. Hence, the most important thing in a concrete implementation of recursive mechanism is to find sequences $H$ and $G$ with $G_{|P|}$ as small as possible.

\subsection{A General but Inefficient Implementation}
\label{sec:framework:inefficient}

Now we present a general but inefficient implementation of the recursive mechanism, which can answer any monotonic queries on sensitive databases. For a monotonic query $q$, we construct $H$ and $G$ as follows:

\begin{align}
    H_i(P,M)&=\min_{(P',M')\preceq(P,M),|P'|=i} q(M'(P'))\\
    G_i(P,M)&=\min_{(P',M')\preceq(P,M),|P'|=i} \widetilde{GS}_q(P',M')\\
\end{align}

Then we can show that the above $H$ and $G$ are what we want.

\begin{theorem}
    The sequence $H$ is a recursive sequence, and the sequence $G$ is a bounding sequence of $H$.
\end{theorem}

Because $G_{|P|}(P,M)=\widetilde{GS}_q(P,M)$, the error bound of recursive mechanism using these $H$ and $G$ is roughly proportional to the global empirical sensitivity of $q$. The main disadvantage of this implementation is the expensive computation cost for $H$ and $G$.

    \section{Efficient Recursive Mechanism}\label{sec:relaxation}

In this section, we present an efficient recursive mechanism, which takes polynomial computation cost and can answer linear queries on sensitive $K$-relations.

\subsection{Recursive Mechanism with Relaxation}

The central idea of the efficient recursive mechanism is relaxation, which introduces a mapping $\phi:K\rightarrow [0,1]^{[0,1]^P}$ that maps each Boolean expression in $K$ into a $[0,1]$-valued expression $\phi_k:[0,1]^P\rightarrow [0,1]$. The detail of $\phi$ will be discussed in the next subsection. Now, we first give some required properties of $\phi$.

For simplifying notations, we let $\TRUE=1$ and $\FALSE=0$. For $f:P\rightarrow [0,1]$, we define $|f|=\sum_p f(p)$. By $f\leq g$ we mean $f(p)\leq g(p)$ for all $p$. The mapping $\phi$ has the following properties.

\begin{description}
\item[Correctness]
    For any $k\in K$ and any Boolean assignment $f:P\rightarrow\{0,1\}$, $\phi_k(f)=k(f)$.
\item[Naturalness]
    For any $k\in K$, any real assignment $f:P\rightarrow[0,1]$ and any $p\in P$, if $f(p)=0$, then $\phi_k(f)=\phi_{k_{|p\rightarrow\FALSE}}(f)$, and if $f(p)=1$, then $\phi_k(f)=\phi_{k_{|p\rightarrow\TRUE}}(f)$.
\item[Monotonicity]
    For any $k\in K$ and any real assignments $f,g:P\rightarrow[0,1]$, if $f\leq g$, then $\phi_k(f)\leq\phi_k(g)$.
\item[Convexity]
    For any $k\in K$, $\phi_k$ is a convex function.
\item[Truncated Linearity]
    Define $\psi(x)=\min(1,x)$ and $\phi^*_k(f)=1-\phi_k(1-\psi\circ f)$. For any $k\in K$, $f:P\rightarrow[0,1]$ and $c\geq 1$, $\phi^*_k(c f)=\min(1,c\phi^*_k(f))$
\end{description}

Then, we introduce the notion of \emph{equivalence} --- two Boolean expressions in $K$ are equivalent if their relaxed functions under $\phi$ are the same. This completes Definition~\ref{def:problem:krelationneighboring} for neighboring sensitive $K$-relations.

\begin{definition}[Equivalence]
    For any $k_1,k_2\in K$, $k_1$ and $k_2$ are equivalent, denoted by $k_1\sim k_2$, if $\phi_{k_1}=\phi_{k_2}$.
\end{definition}

Equivalence of two expressions implies that they yield the same truth table. But expressions that yield the same truth table are not necessarily equivalent. We will explain this in the next subsection.

Provided a nonnegative linear query $q:U\TUP\rightarrow\mathbb{R}$ and mapping $\phi$, we construct the recursive sequence $H$ as

\begin{equation}
    H_i(P,R)=\min_{f\in[0,1]^P,|f|=i}\sum_t q(t)\phi_{R(t)}(f) \label{eq:relaxation:h}
\end{equation}
Note that the sum is finite since $R$ has finite support.

\begin{theorem}
    The sequence $H$ is a recursive sequence, and $H_{|P|}(P,R)=q(\SUPP(R))$.
\end{theorem}

To construct the bounding sequence of $H$, we also require that an auxiliary quantity $S_{k,p}$ is provided for each $k\in K$ and $p\in P$, which bounds the maximum change of $\phi_k(f)$ caused by a small change to $f(p)$. Formally, for all $f,g\in[0,1]^P$, if $f\leq g$, and $f(p')=g(p')$ for all $p'\in P-\{p\}$, then
\begin{equation}
    \phi_k(g)-\phi_k(f)\leq (g(p)-f(p))S_{k,p}
\end{equation}
$S_{k,p}$ can be seen as the upper bound of the partial derivative of $\phi_k$ w.r.t. $p$. We call $S_{k,p}$ the $\phi$-sensitivity of the expression $k$ for $p$. We can observe the following fact.

\begin{lemma}
    For any $f\leq g$ in $[0,1]^P$, and any $k\in K$,
    \begin{equation}
        \phi_k(g)-\phi_k(f)\leq\sum_p (g(p)-f(p))S_{k,p}\leq |g-f|\max_p S_{k,p}
    \end{equation}
\end{lemma}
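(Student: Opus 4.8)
The plan is to establish the two inequalities in the chain separately. The first, $\phi_k(g)-\phi_k(f)\leq\sum_p(g(p)-f(p))S_{k,p}$, I would obtain by a telescoping decomposition of the move from $f$ to $g$ into a sequence of single-coordinate steps, each controlled by the defining property of the $\phi$-sensitivity $S_{k,p}$. The second, $\sum_p(g(p)-f(p))S_{k,p}\leq|g-f|\max_p S_{k,p}$, follows immediately from the nonnegativity of the coordinatewise differences.

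First I would set up the telescoping. Enumerate $P=\{p_1,\dots,p_n\}$ with $n=|P|$ and define intermediate assignments $h_0,\dots,h_n$ by letting $h_j$ agree with $g$ on $p_1,\dots,p_j$ and with $f$ on $p_{j+1},\dots,p_n$, so that $h_0=f$ and $h_n=g$. Because $f,g\in[0,1]^P$ with $f\leq g$, each coordinate of every $h_j$ equals either $f(p_i)$ or $g(p_i)$ and hence lies in $[0,1]$, so $h_j\in[0,1]^P$; moreover consecutive assignments satisfy $h_{j-1}\leq h_j$ and differ only in coordinate $p_j$. The hypothesis defining $S_{k,p_j}$ therefore applies to the pair $(h_{j-1},h_j)$ and gives $\phi_k(h_j)-\phi_k(h_{j-1})\leq(g(p_j)-f(p_j))S_{k,p_j}$. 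Summing the telescoping series would then yield
\begin{equation*}
    \phi_k(g)-\phi_k(f)=\sum_{j=1}^{n}\bigl(\phi_k(h_j)-\phi_k(h_{j-1})\bigr)\leq\sum_{j=1}^{n}(g(p_j)-f(p_j))S_{k,p_j}=\sum_{p}(g(p)-f(p))S_{k,p},
\end{equation*}
which is the first inequality.

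For the second inequality I would use that $f\leq g$ forces each difference $g(p)-f(p)$ to be nonnegative, so replacing the factor $S_{k,p}$ by the larger value $\max_p S_{k,p}$ preserves the inequality term by term; summing and noting that $|g-f|=\sum_p(g(p)-f(p))$ (all summands nonnegative, matching the paper's convention $|\cdot|=\sum_p(\cdot)$) closes the chain.

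I expect no genuine obstacle here. The only point that needs care is verifying that every intermediate assignment $h_j$ remains in the domain $[0,1]^P$ and that consecutive assignments are ordered and differ in exactly one coordinate, so that the defining property of $S_{k,p}$ is legitimately invoked at each step. Note that the \emph{Monotonicity} property of $\phi_k$ is not even required for this argument, since the single-coordinate bound already supplies both the magnitude and the sign of each increment.
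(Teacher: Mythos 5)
Your proof is correct. Note that the paper itself supplies no proof of this lemma (in the appendix its counter is simply skipped with no accompanying argument, the statement evidently being regarded as an immediate consequence of the definition of $S_{k,p}$), so there is nothing to compare against; your coordinate-by-coordinate telescoping from $f$ to $g$, with the check that each intermediate assignment stays in $[0,1]^P$ and that consecutive ones differ in exactly one coordinate, is exactly the argument needed to make the claim rigorous, and the second inequality follows as you say from $g(p)-f(p)\geq 0$ together with the paper's convention $|g-f|=\sum_p(g(p)-f(p))$.
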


Assuming that all $\phi$-sensitivities $S_{k,p}$ are known, we construct a $2$-bounding sequence $G$ of $H$ as

\begin{equation}
    G_i(P,R)=2\min_{f\in[0,1]^P,|f|=i}\max_{p\in P}\sum_t q(t)\phi_{R(t)}(f)S_{R(t),p}
\end{equation}

\begin{theorem}
    The sequence $G$ is a $2$-bounding sequence of $H$.
\end{theorem}

\subsection{The Mapping $\phi$}

Here we discuss the mapping $\phi$, the issues about annotation of Boolean expressions, and the utility guarantee of the recursive mechanism. For an expression $k$, we define $\phi_k$ in a recursive way, as follows:
\begin{itemize}
\item
    $\phi_{\FALSE}(f)=0$ and $\phi_{\TRUE}(f)=1$ for all $f$
\item
    $\phi_p(f)=f(p)$ for all $p\in P$
\item
    $\phi_{x\wedge y}(f)=\max\{0,\phi_x(f)+\phi_y(f)-1\}$ and $\phi_{x\vee y}(f)=\max\{\phi_x(f),\phi_y(f)\}$ for all expressions $x$ and $y$
\end{itemize}

It can be shown that the above $\phi$ is just what we need.

\begin{theorem}
    The mapping $\phi$, defined above, have the desired properties of correctness, naturalness, monotonicity, convexity, and truncated linearity.
\end{theorem}

The output of mapping $\phi$ is invariant under certain kinds of transformations of the input expressions.
\begin{description}
\item[Identity]
    $\phi_{x\wedge\TRUE}=\phi_x$, $\phi_{x\vee\FALSE}=\phi_x$
\item[Annihilator]
    $\phi_{x\wedge\FALSE}=\phi_{\FALSE}$, $\phi_{x\vee\TRUE}=\phi_{\TRUE}$
\item[Associativity]
    $\phi_{x\wedge(y\wedge z)}=\phi_{(x\wedge y)\wedge z}$, $\phi_{x\vee(y\vee z)}=\phi_{(x\vee y)\vee z}$
\item[Distributivity of $\wedge$ over $\vee$]
    $\phi_{x\wedge(y\vee z)}=\phi_{(x\wedge y)\vee(x\wedge z)}$
\end{description}
Two expressions are equivalent if one can be obtained from another via a series of above transformations. Because $\phi$ is defined recursively, the above transformations can be applied to any place of an expression $k$ without changing $\phi_k$.

Before invoking our mechanism, one needs to first generate a sensitive $K$-relation from the sensitive database and then issue a monotonic query. To satisfy differential privacy, it is important to ensure that for any neighboring sensitive databases the resulting sensitive $K$-relations are still neighboring, according to Definition~\ref{def:problem:krelationneighboring}. Hence, when we annotate tuples with expressions that specify their conditions of presence, we should take care of the way we write the expressions. Specifically, if a tuple $t$ is annotated with expression $k$, then we should ensure that when any participant $p$ opts out, the new expression $k'$ annotated with $t$ can be obtained from $k_{|p\rightarrow\FALSE}$ via a series of invariant transformations. If this is guaranteed, then we say the annotation is safe. Fortunately, safe annotation is often easy to achieve. For positive relational algebra queries, the annotation provided in Sec.~\ref{sec:pre:krelation} is always safe. Moreover, two expressions in disjunctive normal form are equivalent if and only if they produce the same truth table. Therefore, if we always expand all expressions into disjunctive normal form, then the annotation is always safe.

The $\phi$-sensitivities $S_{k,p}$, which bound the partial derivative of $\phi_k$ w.r.t. $p$, are also computed in a recursive way
\begin{itemize}
\item
    $S_{\TRUE,p}=S_{\FALSE,p}=0$ and $S_{p,p}=1$
\item
    $S_{x\wedge y,p}=S_{x,p}+S_{y,p}$ and $S_{x\vee y,p}=\max\{S_{x,p},S_{y,p}\}$
\end{itemize}

We can observe several properties of $\phi$-sensitivities: 1) $S_{k,p}$ is not greater than the number of occurrences of $p$ in expression $k$; 2) $S_{k,p}$ is at most one plus the the number of occurrences of $\wedge$ in $k$; 3) if $k$ is written in disjunctive normal form (e.g., the case of subgraph counting), then $S_{k,p}\leq 1$; 4) for positive relational algebra query, if each tuple in the input tables is associated with at most one participant, and we use the approach described in Sec.~\ref{sec:pre:krelation} to annotate the tuples in the output table with expressions, then $S_{k,p}$ is at most one plus the number of operations in the positive relational algebra query. In Fig.~\ref{fig:relaxation:phisensi} we present several examples of $\phi$-sensitivity.

If we take the maximum $S$ of $S_{k,p}$ over all $k\in\{R(t)\}$ and $p\in P$, then we can find that $G_{|P|}(P,R)\leq 2S\cdot\widetilde{US}_q(P,R)$. Hence, we conclude that the error bound of our mechanism is roughly proportional to $S$ times the universal empirical sensitivity of $q$. In general, $S$ is linear in the length of the positive relational algebra query. If all expressions are converted to disjunctive normal form, then $S$ is just a constant 1. In particular, for subgraph counting we have $\widetilde{US}_q=\widetilde{GS}_q=\widetilde{LS}_q$. Thus the error bound is roughly proportional to the local empirical sensitivity of $q$.

\begin{figure}[!t]
    \begin{tabular}{|c|c|}
        \hline
        expression $k$ & $\phi$-sensitivities of $k$ \\\hline
        $a\wedge b\wedge c$ & $S_{k,a}=S_{k,b}=S_{k,c}=1$ \\\hline
        $(a\vee b)\wedge(a\vee c)\wedge(b\vee d)$ & $S_{k,a}=S_{k,b}=2$, $S_{k,c}=S_{k,d}=1$ \\\hline
        $(a\wedge b)\vee(a\wedge c)\vee(b\wedge d)$ & $S_{k,a}=S_{k,b}=S_{k,c}=S_{k,d}=1$ \\\hline
    \end{tabular}
    \caption{Examples of $\phi$-sensitivities}
    \label{fig:relaxation:phisensi}
\end{figure}

\subsection{Computation Cost}

Note that the computation for each $H_i$ and $G_i$ can be encoded into a linear program with $O(L)$ variables, where $L$ denotes the total length of all annotated expressions $R(t)$ for $t\in\SUPP(R)$. Therefore, our mechanism can run in polynomial time.

A simple algorithm that computes all $H_i$ and $G_i$ will need to solve $O(|P|)$ linear programs. We can improve this by utilizing the monotonicity of $G$ and the convexity of $H$.

\begin{lemma}
    (Convexity of $H$) $H_{i+1}-H_i\leq H_{i+2}-H_{i+1}$ for all $0\leq i\leq |P|-2$.
\end{lemma}

Let $j=\arg\min\{e^{j\beta}\theta:G_{|P|-j}\leq e^{j\beta}\theta\}$, then $\Delta=e^{j\beta}\theta$. We can observe that $j=\ln(\frac{\Delta}{\theta})/\beta\leq 1+\ln(\frac{G_{|P|}}{\theta})/\beta$. Hence, $\Delta$ can be computed with access to the last $O(\ln(G_{|P|})/\beta)$ entries of $G$. Furthermore, because $G_{|P|-j}-e^{j\beta}\theta$ is monotonously decreasing, we can use binary search to find $j$, with access to only $O(\ln(\ln(G_{|P|})/\beta))$ entries of $G$.

Given $\Delta$ and $\widetilde{\Delta}$, we then compute $X=H_i+(|P|-i)\widehat{\Delta}$, where $i=\arg\min\{H_i+(|P|-i)\widehat{\Delta}:0\leq i\leq |P|\}$. To do this, we compute
\begin{align}
    i'={\arg\min}_{i'\in [0,|P|]} H_{i'}+(|P|-i')\widehat{\Delta}
\end{align}

In the above formula, the range of $i'$ is a real interval rather than an integer, and the definition of $H_{i'}$ is the same as Eq.~\ref{eq:relaxation:h}. So $i'$ can be computed by solving a linear program. Due to the convexity of $H$, we also know that $\lfloor i'\rfloor\leq i\leq \lceil i'\rceil$. Hence, $i$ can then be computed with access to only two entries of $H$.

\begin{theorem}
    Efficient recursive mechanism can run in $O(\ln(\ln(G_{|P|})/\beta) T(L))$ time, where $T(L)$ denotes time needed to solve a linear program with $O(L)$ variables, and $L$ denotes the total length of all annotated expressions $R(t)$ for $t\in\SUPP(R)$.
\end{theorem}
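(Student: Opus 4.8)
The plan is to bound the running time by counting the number of linear programs the mechanism must solve, since every other operation (sampling Laplace noise, scalar arithmetic, comparisons) costs $O(1)$. I would first recall the encoding already asserted in this section: an evaluation of $H$ at a (possibly real) index and an evaluation of $G$ at an index each reduce to a single LP with $O(L)$ variables. Concretely, for each subexpression $k$ occurring among the annotations I introduce a variable $u_k$ and, reading the recursive definition of $\phi$, add the constraints $u_{x\vee y}\ge u_x$, $u_{x\vee y}\ge u_y$, $u_{x\wedge y}\ge 0$, $u_{x\wedge y}\ge u_x+u_y-1$ and $u_p=f(p)$. Because $q(t)\ge 0$, $S_{R(t),p}\ge 0$ and the objective is being minimized, the optimum drives each $u_k$ down to exactly $\phi_k(f)$, so the LP value equals the intended convex piecewise-linear objective; the number of variables is proportional to the total expression length, i.e.\ $O(L)$.

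Next I would treat the computation of $\Delta$, which is the dominant cost. Writing $j=\ln(\Delta/\theta)/\beta$, the bound $j\le 1+\ln(G_{|P|}/\theta)/\beta$ shows the relevant search range is $[0,J]$ with $J=\lceil 1+\ln(G_{|P|}/\theta)/\beta\rceil=O(\ln(G_{|P|})/\beta)$, and one LP solve for $G_{|P|}$ fixes this range. I then need the predicate ``$G_{|P|-j}\le e^{j\beta}\theta$'' to be monotone in $j$: the quantity $D(j)=G_{|P|-j}-e^{j\beta}\theta$ is decreasing because $e^{j\beta}\theta$ increases while $G_{|P|-j}$ decreases. The latter follows from $G_i\le G_{i+1}$ for a fixed relation, which I would prove by taking an optimizer $f$ for index $i+1$, lowering coordinates by a total of one unit to obtain $f'\le f$ with $|f'|=i$, and invoking monotonicity of $\phi$ together with $q,S\ge 0$. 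Monotonicity of $D$ justifies a binary search for the smallest $j$ with $D(j)\le 0$ over $[0,J]$, costing $O(\log J)=O(\log(\ln(G_{|P|})/\beta))$ evaluations of $G$, hence that many LP solves.

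Finally I would account for $X$. After sampling $\widehat{\Delta}$ from $\Delta$ in $O(1)$ time, I fold the index into the objective and solve the single LP $\min_{f\in[0,1]^P}\sum_t q(t)\phi_{R(t)}(f)-\widehat{\Delta}\sum_p f(p)+|P|\widehat{\Delta}$, whose optimum equals $\min_{i'\in[0,|P|]}H_{i'}+(|P|-i')\widehat{\Delta}$ and whose optimizer has $|f^*|=i'$. By the convexity of $H$ just established, the integer index $i$ achieving $X$ satisfies $\lfloor i'\rfloor\le i\le\lceil i'\rceil$, so two further evaluations of $H$ suffice to pick $X$; releasing $\widehat{X}=X+Y$ is again $O(1)$. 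Summing, the mechanism solves $O(\log(\ln(G_{|P|})/\beta))+O(1)$ linear programs, each in $T(L)$ time, which gives the stated bound.

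I expect the main obstacle to be the two structural facts that make the searches cheap, rather than the arithmetic: verifying that $D(j)$ is genuinely monotone (so binary search returns the correct threshold) and that the continuous LP relaxation pins the integer minimizer of $X$ to two adjacent indices. Both reduce to monotonicity of $\phi$ (and thence of $G$) and to the convexity lemma for $H$, all of which are available here; once they are in place the time bound is a matter of adding up the LP solves.
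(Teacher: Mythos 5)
Your proposal is correct and follows essentially the same route as the paper: one LP of $O(L)$ variables per evaluation of $H_{i}$ or $G_{i}$, binary search for $\Delta$ over a range of size $O(\ln(G_{|P|})/\beta)$ justified by the monotonicity of $G_{|P|-j}-e^{j\beta}\theta$, and a single continuous-index LP plus the convexity of $H$ to pin the minimizer of $X$ to two adjacent integers. You merely make explicit two things the paper asserts without detail (the constraint encoding of $\phi$ and the monotonicity of $G_i$ in $i$ for a fixed relation), and both of your arguments for these are sound.
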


    \section{Experimental Evaluation}

\begin{figure*}[!t]
    \centering
    \includegraphics[height=10pt]{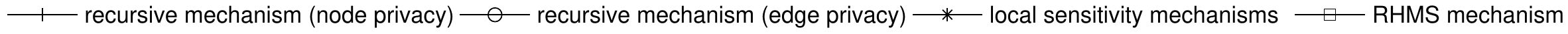}
    \begin{tabular}{m{8pt}@{}m{0.7\textwidth}}
        \rotatebox{90}{median relative error} & \includegraphics[width=0.7\textwidth]{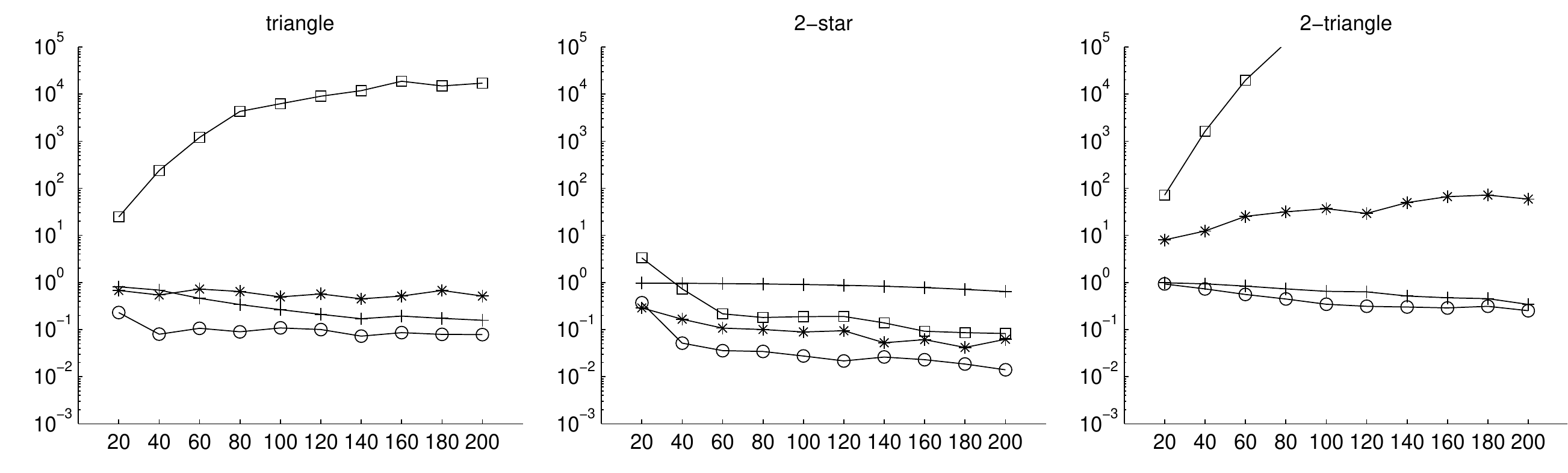} \\
        & \centering number of nodes
    \end{tabular}\\
    (a) Comparison on graphs with various number of nodes. $\AVGDEG=10$.
    \begin{tabular}{m{8pt}@{}m{0.7\textwidth}}
        \rotatebox{90}{median relative error} & \includegraphics[width=0.7\textwidth]{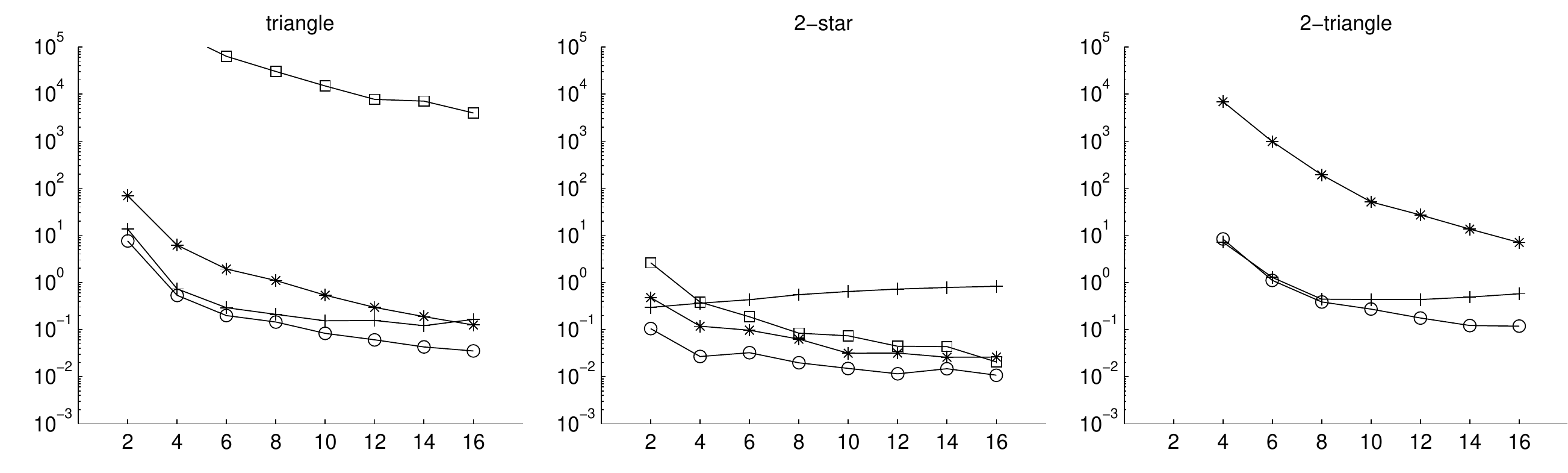} \\
        & \centering average degree
    \end{tabular}\\
    (b) Comparison on graphs with various average degrees. $|V|=200$.
    \begin{tabular}{m{8pt}@{}m{0.7\textwidth}}
        \rotatebox{90}{median relative error} & \includegraphics[width=0.7\textwidth]{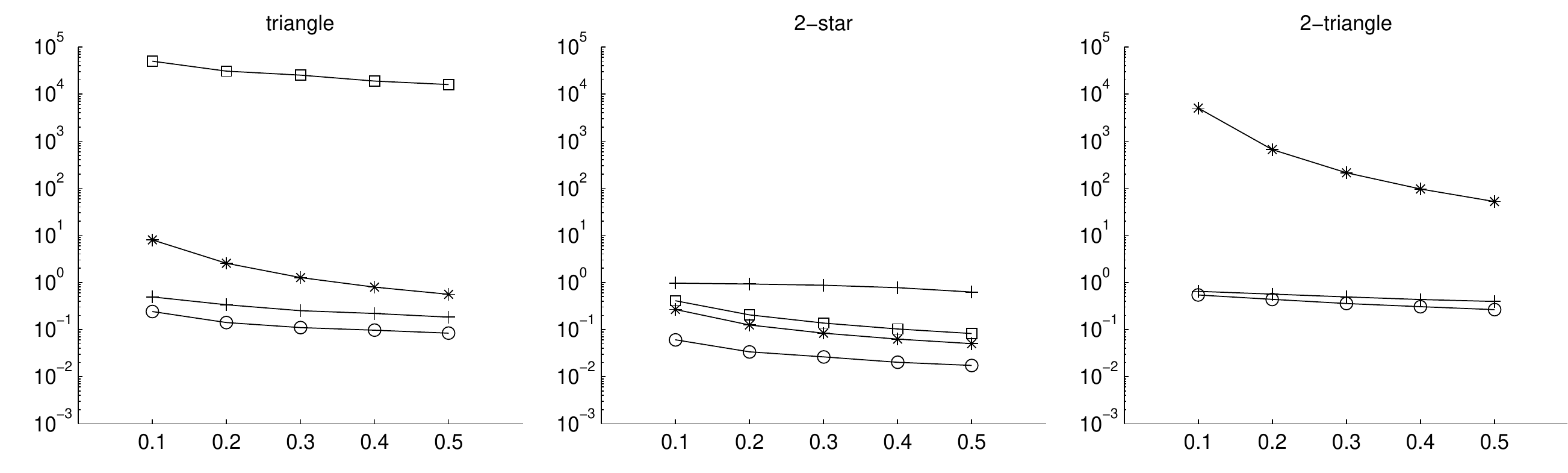} \\
        & \centering $\epsilon$
    \end{tabular}\\
    (c) Comparison with various $\epsilon$. $|V|=200$ and $\AVGDEG=10$.
    \caption{Comparing accuracy of different mechanisms in various settings.}
    \label{fig:exp:subgraph:compare}
\end{figure*}

In this section, we empirically evaluate the performance of our mechanism. We first compare our mechanism with existing mechanisms for answering subgraph counting queries, then we use our mechanism to process more general $K$-relations.

For each experiment, we generate several different graphs (or $K$-relations) by random, and for each graph we run every mechanism many times to obtain a series of answers. We measure the accuracy of mechanisms by median relative error, that is, the median of the ratios between the absolute errors and the true answers. This measure of accuracy is consistent with the work~\cite{DBLP:journals/pvldb/KarwaRSY11}.

\subsection{Subgraph Counting}

For subgraph counting, we compare the accuracy of our mechanism with the following existing mechanisms:

\emph{Local sensitivity mechanisms} include the triangle algorithm of \cite{DBLP:conf/stoc/NissimRS07}, the $k$-star algorithm and the $k$-triangle mechanism of \cite{DBLP:journals/pvldb/KarwaRSY11}. All algorithms are based on the local sensitivity of the query. The $k$-triangle algorithm achieves only $(\epsilon,\delta)$-differential privacy, while the others can achieve $\epsilon$-differential privacy.

\emph{RHMS mechanism} of \cite{DBLP:conf/pods/RastogiHMS09} can process subgraph counting for any connected subgraphs. It achieves  only $(\epsilon,\gamma)$-adversarial privacy for a specific class of adversaries.

We set $\epsilon=0.5$ and $\delta=\gamma=0.1$, which follows the parameter setting of \cite{DBLP:journals/pvldb/KarwaRSY11}.\footnote{It is widely believed by researchers that to provide useful privacy guarantee $\delta$ should be a negligible function of database size ~\cite{DBLP:conf/stoc/NissimRS07,DBLP:conf/stoc/DworkL09,DBLP:conf/focs/DworkRV10,DBLP:conf/stoc/RothR10} (i.e., $\delta$ is asymptotically smaller than any inverse polynomial: $\delta=1/|P|^{\omega(1)}$). However, the $k$-triangle algorithm~\cite{DBLP:journals/pvldb/KarwaRSY11} yields too noisy answers for such small $\delta$.} Our mechanism can achieve $\epsilon$-differential privacy, which is much stronger than the corresponding $(\epsilon,\delta)$-differential privacy and $(\epsilon,\gamma)$-adversarial privacy. We test two versions of our mechanism, one provides node privacy, and the other provides edge privacy. Because node privacy requires that the released answer must be insensitive to the change of one node and all of its incident edges, it needs to introduce noise of much greater magnitude into the answer. Note that all other mechanisms in comparison can only provide edge privacy. For our mechanism, we simply set $\theta=1$, $\beta=\epsilon/5$ and $\mu=0.5$, and we set $\mu=1$ for node differential privacy.

We first perform experiments on synthetic graphs that are generated by random. We generate graphs with various numbers of nodes and average degree $\AVGDEG$. Each edge in the graph appears independently with probability $\AVGDEG/(|V|-1)$. The experimental results are presented in Fig.~\ref{fig:exp:subgraph:compare}.

\begin{figure}[!t]
    \centering
    \includegraphics[width=1.1\columnwidth]{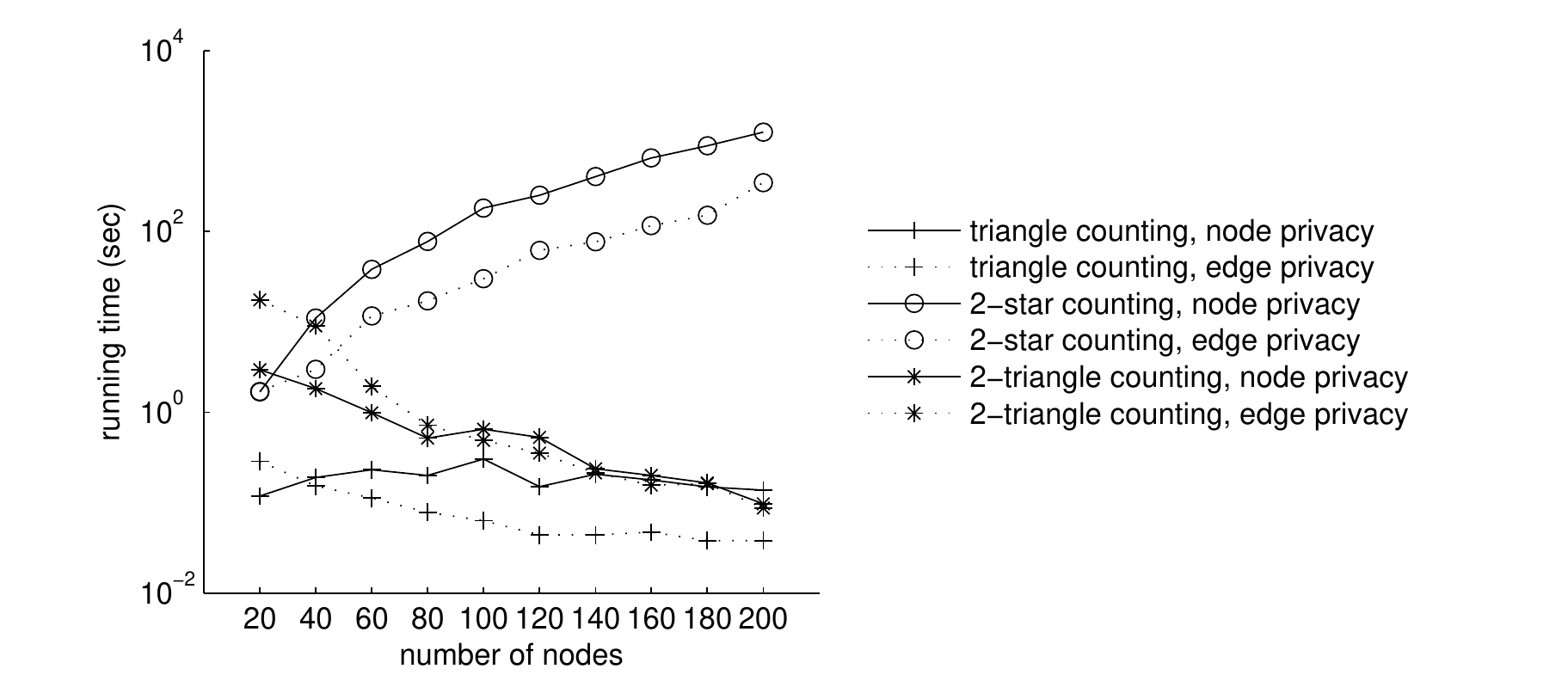}
    \caption{Running time of recursive mechanism, $\AVGDEG=10$.}
    \label{fig:exp:subgraph:time_n}
\end{figure}

\begin{figure}[!t]
    \centering
    \scalebox{0.7}{
    \begin{tabular}{|c|c|c|c|c|c|c|c|}
        \hline
        & netscience & power & 1138\_bus & bcspwr10 & gemat12 & ca-GrQc & ca-HepTh \\\hline
        $|V|$ & 1589 & 4941 & 1138 & 5300 & 4929 & 5242 & 9877 \\\hline
        $|E|$ & 2742 & 6594 & 2596 & 13571 & 33111 & 14496 & 25998 \\\hline
        \multicolumn{8}{c}{total number of triangles} \\\hline
        & 3764 & 651 & 128 & 721 & 592 & 48260 & 28339 \\\hline
        \multicolumn{8}{c}{running time of recursive mechanism in seconds (node privacy)} \\\hline
        & 8.924 & 0.468 & 0.078 & 0.655 & 0.640 & 3940.552 & 788.160 \\\hline
        \multicolumn{8}{c}{running time of recursive mechanism in seconds (edge privacy)} \\\hline
        & 19.188 & 0.374 & 0.063 & 0.406 & 0.483 & 54236.462 & 2104.818 \\\hline
    \end{tabular}}
    \caption{Sizes of real graphs and running time of our mechanism for triangle counting.}
    \includegraphics[width=\columnwidth]{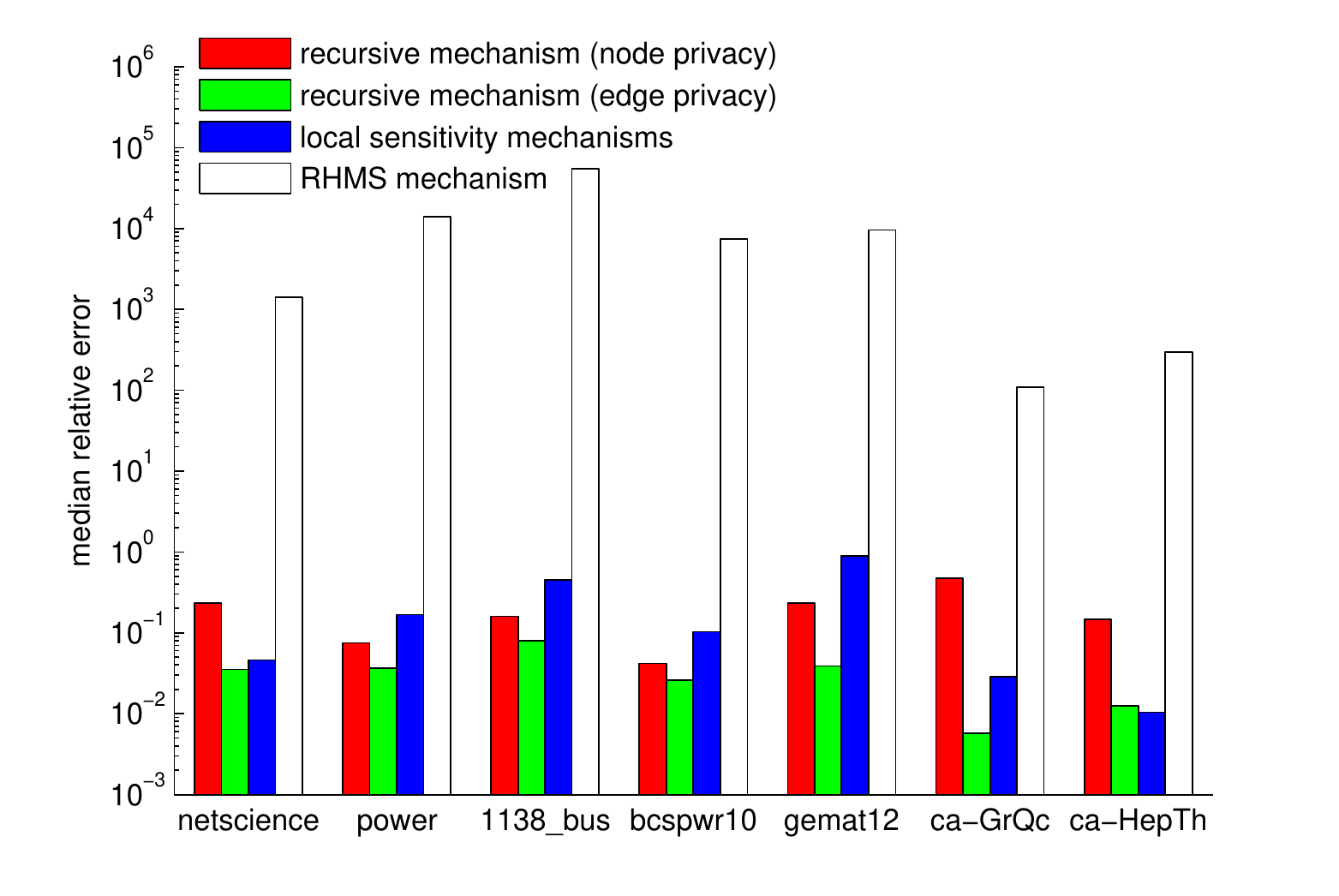}
    \label{fig:exp:subgraph:info_file}
    \vspace{-1cm}
    \caption{Comparing accuracy of different mechanisms for triangle counting on different graphs.}
    \label{fig:exp:subgraph:error_file}
\end{figure}

It can be observed that RHMS mechanism does not yield meaningful answers for triangle counting and $2$-triangle counting. This is because that its error bound grows exponentially with the number of edges in the subgraph. In some experiments, the relative errors of RHMS mechanism are extremely high and the curves do not show in the figures. Moreover, the errors of local sensitivity mechanisms are also too high to be useful for triangle counting and $2$-triangle counting when the graph is very sparse, because the smooth upper bound of local sensitivity is often high (relative to the true answer) for triangle counting on sparse graphs.

Our mechanism, when providing edge privacy (the same as other compared mechanisms), always yield the most accurate answers. When providing node privacy, our mechanism has high relative error for $2$-star counting and $2$-triangle counting, this is because the change of one node can affect a large number of $2$-stars and $2$-triangles in the graph. Nonetheless, the relative error of our mechanism decreases while the size of graph grows.

In Fig.~\ref{fig:exp:subgraph:time_n}, we present the running time of our mechanism. Because each matched subgraph found in the whole graph contributes a tuple into the $K$-relation, the computation cost of our mechanism grows polynomially with the true answer. Since the average degree is fixed, the number of triangles and $2$-triangles often decreases when the graph enlarges, hence our mechanism runs faster for large sparse random graphs. On the other hand, the number of $2$-stars is roughly proportional to the number of nodes, so the running time of our mechanism grows with the graph size\footnote{When the degrees of nodes are large, the number of $k$-stars and $k$-triangles can grow exponentially with $k$. One may think that our mechanism has exponential computation cost in this situation. Actually, the algorithm can be improved by a clever construction of $K$-relation, such that the size of $K$-relation is asymptotically independent of $k$. Due to limitation of space, we cannot present the details in this paper.}.

We also evaluate the mechanisms on several real datasets\footnote{Available at http://www.cise.ufl.edu/research/sparse/matrices/}. Experimental results are shown in Fig.~\ref{fig:exp:subgraph:info_file} and \ref{fig:exp:subgraph:error_file}. We can see that our mechanism are often superior to the other mechanisms. This validates the practical usage of our mechanism.

\begin{figure}[!t]
    \centering
    \mbox{
    \hspace{-.2cm}\includegraphics[width=.6\columnwidth]{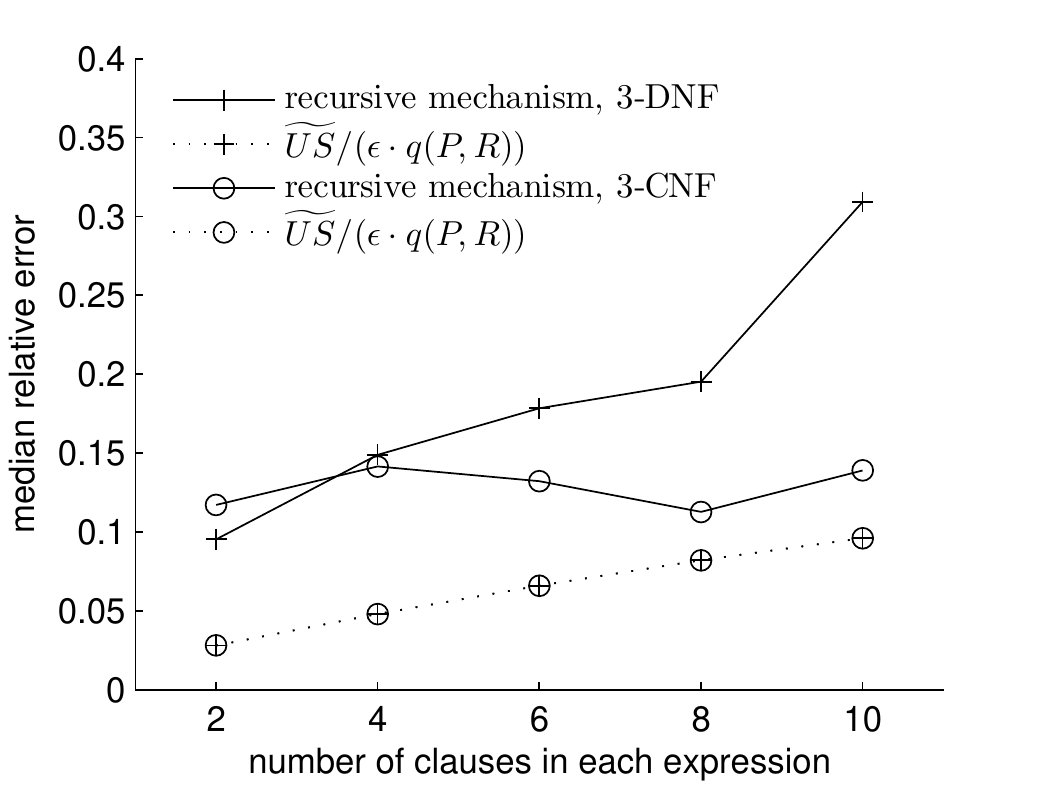}%
    \hspace{-.6cm}\includegraphics[width=.6\columnwidth]{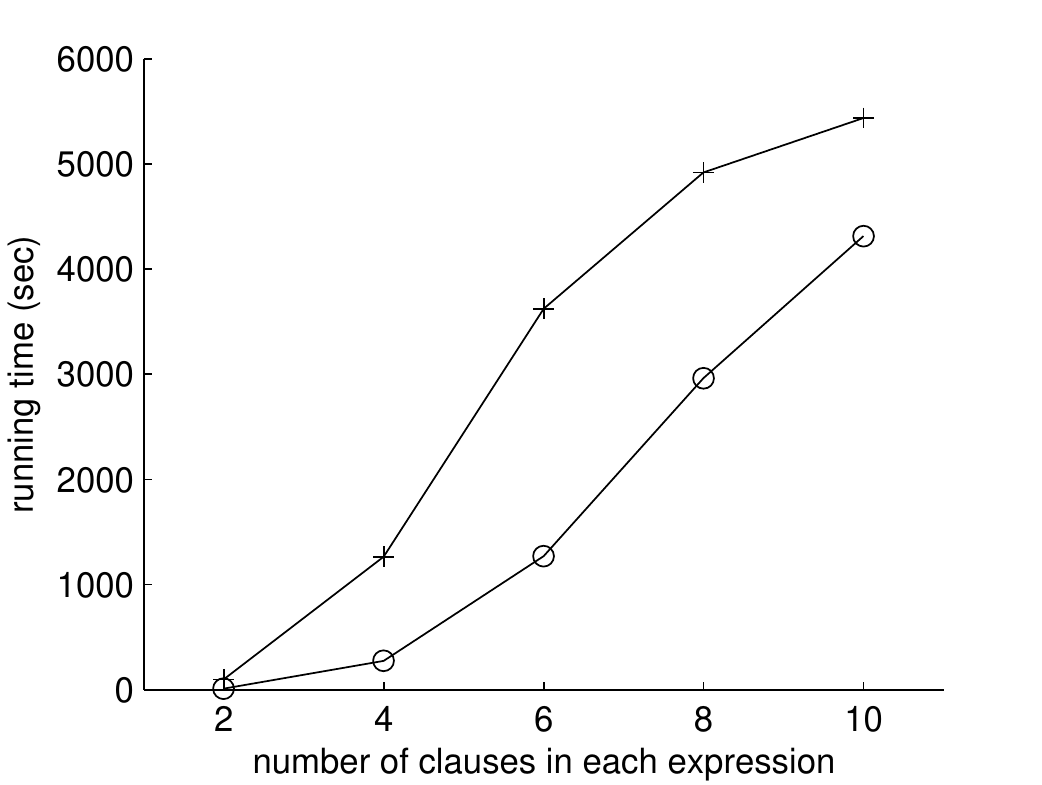}
    }
    \caption{Evaluating recursive mechanism on $K$-relations with various length of expressions, $|\SUPP(R)|=1000$.}
    \label{fig:exp:krelation:k1}

    \mbox{
    \hspace{-.2cm}\includegraphics[width=.6\columnwidth]{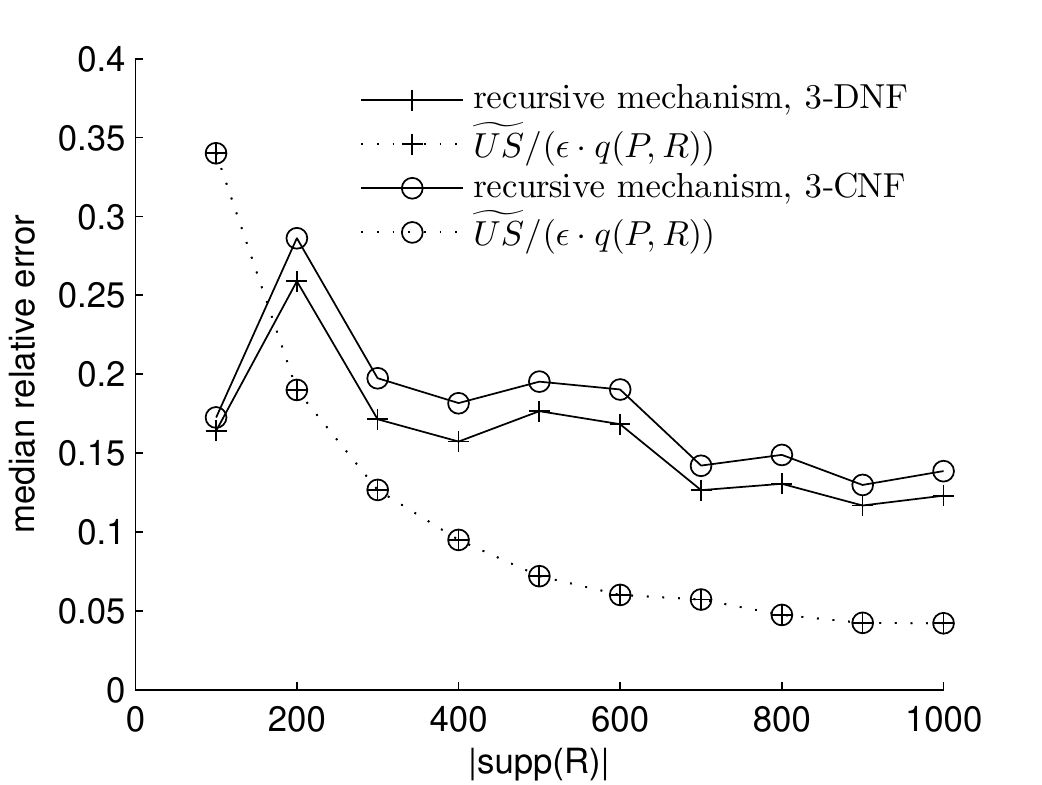}%
    \hspace{-.6cm}\includegraphics[width=.6\columnwidth]{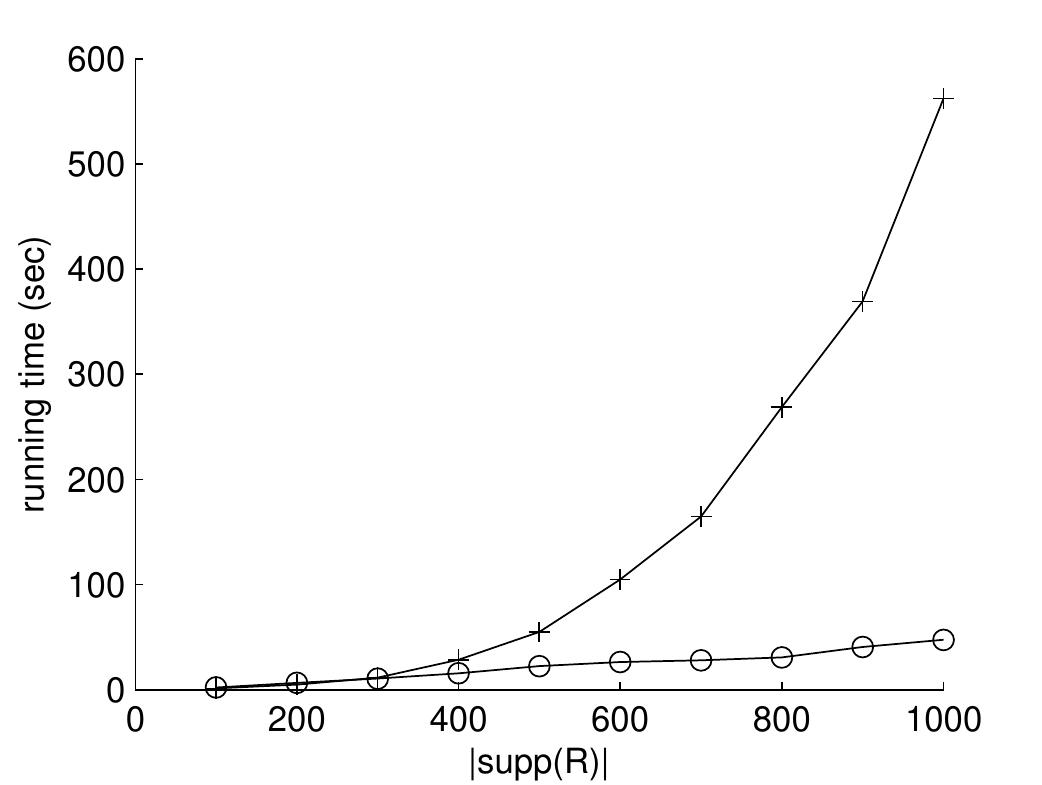}
    }
    \caption{Evaluating recursive mechanism on $K$-relations of various sizes, each expression has 3 clauses.}
    \label{fig:exp:krelation:size}
\end{figure}

%Due to limitation of space, the experimental results for more kinds of subgraphs and for dense graphs are deferred to full paper.

\subsection{Processing $K$-Relations}

Finally, we evaluate the performance of our mechanism for processing more general queries. Because there are many different kinds of positive relational algebra queries, we directly generate $K$-relations that could be produced by some relational queries. In particular, we consider two kinds of $K$-relations: $K$-relations in which every tuple is annotated with a $3$-DNF Boolean expression, and $K$-relations in which every tuple is annotated with a $3$-CNF Boolean expression. A $3$-DNF $K$-relation can be produced by a union of many join results, and a $3$-CNF $K$-relation can be produced by a join of many unions of tables. We simply generate all expressions by random, but ensure that all annotated expressions have the same length. We also make $|P|$, the total number of variables, equal to $|\SUPP(R)|$, the size of the $K$-relation. We let $q(t)=1$, that is, the true answer is just $|\SUPP(R)|$. The performance of our mechanism is shown in Fig.~\ref{fig:exp:krelation:k1} and \ref{fig:exp:krelation:size}. We do not present experimental results for different kinds of $q(t)$ because the curves are almost the same.

The dotted curves in the figures denote the relative error if the absolute error exactly matches $\widetilde{US}_q/\epsilon$, where $\widetilde{US}_q$ is the maximum number of tuples that have at least one common participant appearing in their annotated expressions. The error of our mechanism is nearly linear in $\widetilde{US}_q/\epsilon$, as shown in the figures. The empirical sensitivity $\widetilde{US}_q$ is insensitive to the increase of the number of participants and the number of tuples in $R$. Hence the relative error of our mechanism can gradually decrease if more data are available. In terms of computation cost, the running time of our mechanism grows polynomially with $|\SUPP(R)|$ and the length of expressions.

    \section{Related Work}
\label{sec:relatedwork}

Since differential privacy was introduced~\cite{DBLP:conf/tcc/DworkMNS06}, it has gained considerable attention, and many techniques were developed for private data analysis. Dwork {\it et al.}~\cite{DBLP:conf/tcc/DworkMNS06} showed that differential privacy can be achieved if we calibrate the noise to the global sensitivity of the query, and proposed the Laplace mechanism. The noise yielded by Laplace mechanism is independent of the database instance. Due to simplicity and wide applicability of Laplace mechanism, many succeeding work for various query tasks were built upon Laplace mechanism. These include the relevant work~\cite{DBLP:conf/sigmod/McSherry09,Palamidessi12} that studied relational algebra queries under differential privacy.

Laplace mechanism fails to provide useful answers for queries that have large global sensitivity. Nissim {\it et al.}~\cite{DBLP:conf/stoc/NissimRS07} introduced the notion of local sensitivity, and proposed to calibrate the noise to a smooth upper bound of the local sensitivity. This leads to the idea of instance-dependent noise. They also gave algorithms for computing smooth sensitivity of triangle count as well as some other statistics in a variety of domains. However, there are no general way to compute the smooth sensitivity of a given query.

Inspired by the work \cite{DBLP:conf/stoc/NissimRS07}, Karwa {\it et al.}~\cite{DBLP:journals/pvldb/KarwaRSY11} studied the problem of $k$-star counting and $k$-triangle counting, which were based on the local sensitivity of the query. Rastogi {\it et al.}~\cite{DBLP:conf/pods/RastogiHMS09} addressed counting of general subgraphs, which achieves utility better than global sensitivity based mechanism by relaxing the privacy guarantee. These approaches only provide edge privacy guarantee.

There were also extensive studies on privacy in graph data beyond the scope of differential privacy, but most do not provide qualitative privacy and utility guarantee. Readers can refer to the survey~\cite{DBLP:journals/sigkdd/ZhouPL08} for techniques that are based on $k$-anonymity. 
    \section{Conclusion}

In this paper, we have presented a novel differentially private mechanism for releasing an approximation to a linear statistic of a table output by some positive relational algebra query to a database. It turns subgraph counting as a special case, and can provide guarantee of either node differential privacy or edge differential privacy. Empirical evaluation shows that our mechanism can return more accurate answer than existing algorithms for subgraph counting, while achieving the same or even stronger privacy guarantee.

    \bibliographystyle{abbrv}
    \bibliography{references}

    \appendix

\section{Proofs}

\setcounter{lemma}{0}
\setcounter{theorem}{0}

\begin{lemma}
    $GS_{\ln\Delta}\leq\beta$.
\end{lemma}

\begin{proof}[Sketch]
    For all neighboring $(P_1,M_1)\preceq(P_2,M_2)$, let $i=\arg\min_i\{e^{i\beta}\theta:G_{|P_1|-i}(P_1,M_1)\leq e^{i\beta}\theta\}$ and $j=\arg\min_j\{e^{j\beta}\theta:G_{|P_2|-j}(P_2,M_2)\leq e^{j\beta}\theta\}$, we show that $i\leq j\leq i+1$. Note that $G$ is a recursive sequence. Because $e^{(i-1)\beta}\theta<G_{|P_1|-(i-1)}(P_1,M_1)\leq G_{|P_2|-(i-1)}(P_2,M_2)$, we have $j\geq i$. Similarly, because $e^{(i+1)\beta}\theta\geq G_{|P_1|-i}(P_1,M_1)\geq G_{|P_2|-(i+1)}(P_2,M_2)$, we have $j\leq i+1$.
\end{proof}

\begin{lemma}
    $\Delta\leq\max\{\theta,e^\beta G_{|P|}\}$.
\end{lemma}

\begin{proof}[Sketch]
     Suppose $\Delta=e^{i\beta}\theta$. If $i=0$, then $\Delta=\theta$, otherwise, $\Delta=e^\beta e^{(i-1)\beta}\theta<e^\beta G_{|P|-i}\leq e^\beta G_{|P|}$.
\end{proof}

\begin{lemma}
    $G_{|P|-\ln(\frac{\Delta}{\theta})/\beta}\leq\Delta$.
\end{lemma}

\begin{proof}[Sketch]
    Suppose $\Delta=e^{i\beta}\theta$. Then the lemma is true because $\ln(\frac{\Delta}{\theta})/\beta=i$.
\end{proof}

\addtocounter{lemma}{1}

\begin{lemma}
    $\PR[\widehat{\Delta}>e^{\mu+c}\Delta]\leq\frac{1}{2}e^{-c\epsilon_1/\beta}$ for any $c>0$.
\end{lemma}

\begin{proof}[Sketch]
    \begin{align}
        \PR[\widehat{\Delta}>e^{\mu+c}\Delta]=&\PR_{Y\sim\LAP(\beta/\epsilon_1)}[Y>c]\\
        =&\PR_{Y\sim\LAP(1)}[Y>c\epsilon_1/\beta]\\
        =&\frac{1}{2}e^{-c\epsilon_1/\beta}
    \end{align}
\end{proof}

\begin{lemma}
    $\PR[\widehat{\Delta}<\Delta]\leq\frac{1}{2}e^{-\mu\epsilon_1/\beta}$.
\end{lemma}

\begin{proof}[Sketch]
    The same as the previous lemma.
\end{proof}

\begin{lemma}
    For any fixed $\widehat{\Delta}\geq 0$, $GS_X\leq\widehat{\Delta}$.
\end{lemma}

\begin{proof}[Sketch]
    For all neighboring $(P_1,M_1)\preceq(P_2,M_2)$, let $i=\arg\min_i H_i(P_1,M_1)+(|P_1|-i)\widehat{\Delta}$ and $j=\arg\min_j H_j(P_2,M_2)+(|P_2|-j)\widehat{\Delta}$. Then, we have
    \begin{align}
        X(P_1,M_1)=&H_i(P_1,M_1)+(|P_1|-i)\widehat{\Delta}\\
        \leq& H_{j-1}(P_1,M_1)+(|P_1|-(j-1))\widehat{\Delta}\\
        \leq& H_j(P_2,M_2)+(|P_2|-j)\widehat{\Delta}\\
        =&X(P_2,M_2)\\
        X(P_2,M_2)=&H_j(P_2,M_2)+(|P_2|-j)\widehat{\Delta}\\
        \leq& H_i(P_2,M_2)+(|P_2|-i)\widehat{\Delta}\\
        \leq& H_i(P_1,M_1)+(|P_1|-i+1)\widehat{\Delta}\\
        =&X(P_1,M_1)+\widehat{\Delta}
    \end{align}
\end{proof}

\begin{lemma}
    If $\widehat{\Delta}\geq\Delta$, then $H_{|P|-g\ln(\frac{\Delta}{\theta})/\beta}\leq X\leq H_{|P|}$.
\end{lemma}

\begin{proof}[Sketch]
    The second inequality is obvious, so we show the first inequality. Let $i=\arg\min_i H_i+(|P|-i)\widehat{\Delta}$ and suppose $\Delta=e^{j\beta}\theta$, then
    \begin{align}
        H_{|P|-g\ln(\frac{\Delta}{\theta})/\beta}=&H_{|P|-gj}\\
        &(\text{by the property of $g$-bounding sequence})\notag\\
        \leq& H_i+(|P|-i)G_{|P|-j}\\
        \leq& H_i+(|P|-i)\Delta\\
        \leq& H_i+(|P|-i)\widehat{\Delta}\\
        =&X
    \end{align}
\end{proof}

\begin{theorem}
    For parameters $\epsilon_1>0$, $\epsilon_2>0$, $\beta>0$, $\theta>0$ and $\mu>0$, recursive mechanism, as described above, satisfies $(\epsilon_1+\epsilon_2)$-differential privacy, and is $(e^{2\mu}\Delta^* c/\epsilon_2+
    g\lceil \ln(\frac{\Delta^*}{\theta})/\beta\rceil G_{|P|},
    e^{-\mu\epsilon_1/\beta}+
    e^{-c})$-accurate for any $c>0$, where $\Delta^*=\max\{\theta,e^\beta G_{|P|}\}$.
    If $\epsilon_1=\Theta(\epsilon)$, $\epsilon_2=\Theta(\epsilon)$, $\beta=\epsilon_1/k$, and $\theta$ and $\mu$ are constants, then the mechanism is $(O(k\ln(G_{|P|})G_{|P|}/\epsilon),2e^{-k\mu})$-accurate as $\epsilon\rightarrow 0$,$k\rightarrow\infty$ and $G_{|P|}\rightarrow\infty$.
\end{theorem}

\begin{proof}[Sketch]
    The privacy guarantee is obvious since both the computation of $\widehat{\Delta}$ and $\widehat{X}$ satisfy differential privacy. The utility guarantee is also true because

    1) with probability at least $1-e^{-\mu\epsilon_1/\beta}$, we have $\Delta<\widehat{\Delta}<e^{2\mu}\Delta$;

    2) with probability at least $1-e^{-c}$, we have $|\widehat{X}-X|\leq \widehat{\Delta}c/\epsilon_2$;

    3) if $\widehat{\Delta}\geq\Delta$, we have $|X-H_{|P|}|\leq (g\ln(\frac{\Delta}{\theta})/\beta)G_{|P|}$
\end{proof}

\begin{theorem}
    The sequence $H$ is a recursive sequence, and the sequence $G$ is a bounding sequence of $H$.
\end{theorem}

\begin{proof}[Sketch]
    For any neighboring $(P_1,M_1)\preceq(P_2,M_2)$, $P_1\cup\{p\}=P_2$, and for any $0\leq i\leq|P_1|$, $y\in\{1,2\}$, let $P_y^i=\arg\min_{P'\subseteq P_y,|P'|=i}q(M_y(P'))$. Then, because $H_0=0$ and
    \begin{align}
        H_i(P_2,M_2)=q(M_2(P_2^i))
        &\leq q(M_2(P_1^i))\\
        &=q(M_1(P_1^i))\\
        &=H_i(P_1,M_1)\\
        H_i(P_1,M_1)=q(M_1(P_1^i))
        &\leq q(M_1(P_2^{i+1}-\{p\}))\\
        &\leq q(M_2(P_2^{i+1}))\\
        &=H_{i+1}(P_2,M_2)
    \end{align}
    $H$ is a recursive sequence. The same reasoning also applies to $G$ being a recursive sequence.

    Now we show that $G$ is a bounding sequence of $H$. For any $0\leq i\leq j\leq |P|$, let $A=\arg\min_{P'\subseteq P,|P'|=i}q(M(P'))$, and let $B=\arg\min_{P'\subseteq P,|P'|=i}\widetilde{GS}_q(P',M)$. Then
    \begin{align}
        H_j\leq& q(M(B))\\
        \leq& q(M(A\cap B))+|B-A|\widetilde{GS}_q(B,M)\\
        \leq& q(M(A))+(|P|-i)\widetilde{GS}_q(B,M)\\
        =& H_i+(|P|-i)G_j
    \end{align}
\end{proof}

For the following proofs, we define $f\cup g,f\cap g:P\rightarrow[0,1]$ by $(f\cup g)(p)=\max(f(p),g(p))$ and $(f\cap g)(p)=\min(f(p),g(p))$. We also define $f_p:P\rightarrow[0,1]$ as an indicator function that has $f_p(p)=1$ and $f_p(p')=0$ for all $p'\neq p$, and let $f_{P'}=\sum_{p\in P'}f_p$.

\begin{theorem}
    The sequence $H$ is a recursive sequence, and $H_{|P|}(P,R)=q(\SUPP(R))$.
\end{theorem}

\begin{proof}[Sketch]
    $H_{|P|}(P,R)=q(\SUPP(R))$ is obvious due to correctness of $\phi$. We will show that $H$ is a recursive sequence.

    For any neighboring $(P_1,R_1)\preceq(P_2,R_2)$, $P_1\cup\{p\}=P_2$, and for any $0\leq i\leq |P_1|$, $y\in\{1,2\}$, let $f_y^i=\arg\min_{f\in[0,1]^{P_y},|f|=i}\sum_t q(t)\phi_{R_y(t)}(f)$. Then, we have $H_0=0$ and
    \begin{align}
        H_i(P_2,R_2)=&\sum_t q(t)\phi_{R_2(t)}(f_2^i)\\
        \leq&\sum_t q(t)\phi_{R_2(t)}(f_1^i)\\
        &\text{(naturalness of $\phi$)}\notag\\
        =&\sum_t q(t)\phi_{R_2(t)_{|p\rightarrow\FALSE}}(f_1^i)\\
        &\text{($R_1(t)$ and $R_2(t)_{|p\rightarrow\FALSE}$ are equivalent)}\notag\\
        =&\sum_t q(t)\phi_{R_1(t)}(f_1^i)\\
        =&H_i(P_1,R_1)
    \end{align}
    Due to $|f_2^{i+1}\cap(1-f_p)|\geq i$ and monotonicity of $\phi$, we have
    \begin{align}
        H_i(P_1,R_1)=&\sum_t q(t)\phi_{R_1(t)}(f_1^i)\\
        \leq&\sum_t q(t)\phi_{R_1(t)}(f_2^{i+1}\cap(1-f_p))\\
        =&\sum_t q(t)\phi_{R_2(t)}(f_2^{i+1}\cap(1-f_p))\\
        \leq&\sum_t q(t)\phi_{R_2(t)}(f_2^{i+1})\\
        =&H_{i+1}(P_2,R_2)
    \end{align}
\end{proof}

\addtocounter{lemma}{1}

\begin{theorem}
    The sequence $G$ is a $2$-bounding sequence of $H$.
\end{theorem}

\begin{proof}[Sketch]
    The proof for $G$ being a recursive sequence is the same as the proof for $H$. Now we show that for any $0\leq i\leq j\leq |P|$, we have $H_j\leq H_i+(|P|-i)G_k$, where $k=|P|-\lfloor(|P|-j)/2\rfloor$.

    Let $h=\arg\min_{h\in[0,1]^P,|h|=i}\sum_t q(t)\phi_{R(t)}(h)$, $g=\arg\min_{g\in[0,1]^P,|g|=k}2\max_p\sum_t q(t)\phi_{R(t)}(g)S_{R(t),p}$, and $f(p)=\max(0,1-2(1-g(p)))$ for all $p\in P$. We first observe that, due to truncated linearity of $\phi$, if $\phi_{R(t)}(f)>0$, then $\phi_{R(t)}(g)>0.5$. Thus,
    \begin{align}
        H_j\leq&\sum_t q(t)\phi_{R(t)}(f) \quad\quad \text{(note that $|f|\geq j$)} \\
        \leq&\sum_t q(t)\phi_{R(t)}(h\cap f)+\notag\\
        &|f-h\cap f|\max_{p}\sum_{t:\phi_{R(t)}(f)>0} q(t)S_{R(t),p}\\
        &\text{($\phi_{R(t)}(f)>0\Rightarrow \phi_{R(t)}(g)>0.5$)}\\
        \leq&\sum_t q(t)\phi_{R(t)}(h\cap f)+\notag\\
        &|f-h\cap f|\max_{p}\sum_{t} 2q(t)\phi_{R(t)}(g)S_{R(t),p}\\
        \leq& \sum_t q(t)\phi_{R(t)}(h)+(|P|-i)G_k\\
        =&H_i+(|P|-i)G_k
    \end{align}
\end{proof}

\begin{theorem}
    The mapping $\phi$, defined above, have the desired properties of correctness, naturalness, monotonicity, convexity, and truncated linearity.
\end{theorem}

\begin{proof}[Sketch]
    These properties can be easily proved by induction, so we omit the details.
\end{proof}

\begin{lemma}
    (Convexity of $H$) $H_{i+1}-H_i\leq H_{i+2}-H_{i+1}$ for all $0\leq i\leq |P|-2$.
\end{lemma}

\begin{proof}[Sketch]
    First note that the function $h(f)=\sum_t q(t)\phi_{R(t)}(f)$ is convex, due to the convexity of $\phi$. Then, let $f^i=\arg\min_{f\in[0,1]^P,|f|=i}h(f)$. We have
    \begin{align}
        H_{i+1}=h(f^{i+1})
        \leq& h((f^i+f^{i+2})/2)\\
        & \text{(convexity of $h$)} \notag\\
        \leq & (h(f^i)+h(f^{i+2}))/2\\
        = & (H_i+H_{i+2})/2
    \end{align}
\end{proof}

%\begin{lemma}
%    Suppose that $X=H_i+(|P|-i)\widehat{\Delta}$, then $i\geq|P|-(1+\frac{1}{c})\ln\frac{\Delta}{\theta}/\beta$ as long as $\widehat{\Delta}\geq(1+c)\Delta$, for any $c>0$.
%\end{lemma}
%
%\begin{proof}[Sketch]
%    Let $j=\ln\frac{\Delta}{\theta}/\beta$, then $\widehat{\Delta}\geq (1+c)G_{|P|-j}$. Because $G$ is a bounding sequence, we know $H_{|P|-j}-H_{|P|-(1+\frac{1}{c})j}\leq (1+\frac{1}{c})j G_{|P|-j}$. Due to convexity of $H$, we have $H_{|P|-(1+\frac{1}{c})j+1}-H_{|P|-(1+\frac{1}{c})j}\leq (1+c)G_{|P|-j}\leq\widehat{\Delta}$. Hence, $i\geq |P|-(1+\frac{1}{c})j$.
%\end{proof}

\end{document}